\newcommand{\ble}{\begin{lemma}}
\newcommand{\ele}{\end{lemma}}
\newtheorem{lemma}{Lemma}[section]
\newtheorem{theorem}[lemma]{Theorem}
\newtheorem{definition}[lemma]{Definition}
\newtheorem{fact}[lemma]{Fact}
\newcommand{\beao}{\begin{eqnarray*}}
\newcommand{\eeao}{\end{eqnarray*}\noindent}
\newcommand{\beam}{\begin{eqnarray}}
\newcommand{\eeam}{\end{eqnarray}\noindent}
\newcommand{\one}{{\bf 1}}
\newcommand{\RR}{{\mathbb{R}}}
\newcommand{\eps}{\epsilon}
\DeclareMathOperator{\E}{E}
\begin{document}

\title{Rademacher Chaos, Random Eulerian Graphs and The Sparse Johnson-Lindenstrauss Transform}
\author{Vladimir Braverman\footnote{University of California Los Angeles, Computer Science Department. Email: {\tt vova@cs.ucla.edu}.}
\ \ \ \ \
Rafail Ostrovsky\footnote{University of California Los Angeles, Computer Science and Mathematics Departments. Email: {\tt rafail@cs.ucla.edu}.}
\ \ \ \ \
Yuval Rabani\footnote{The Rachel and Selim Benin School of Computer
Science and Engineering, The Hebrew University of Jerusalem,
Jerusalem 91904, Israel. Email: {\tt yrabani@cs.huji.ac.il}.
Research supported by ISF grant 1109-07 and BSF grant 2008059.}\\}

\begin{titlepage}

\maketitle
\begin{abstract}
The celebrated dimension reduction lemma of Johnson and Lindenstrauss has
numerous computational and other applications. Due to its application in
practice, speeding up the computation of a Johnson-Lindenstrauss style
dimension reduction is an important question. Recently, Dasgupta, Kumar,
and Sarlos (STOC 2010) constructed such a transform that uses a sparse
matrix. This is motivated by the desire to speed up the computation when
applied to sparse input vectors, a scenario that comes up in applications.
The sparsity of their construction was further improved by Kane and
Nelson (ArXiv 2010).

We improve the previous bound on the number of non-zero entries per
column of Kane and Nelson from
$O(1/\epsilon \log(1/\delta)\log(k/\delta))$ (where the target dimension
is $k$, the distortion is $1\pm \epsilon$, and the failure probability is $\delta$) to
$$
O\left({1\over\epsilon} \left({\log(1/\delta)\log\log\log(1/\delta)
\over \log\log(1/\delta)}\right)^2\right).
$$

We also improve the amount of randomness needed to generate the matrix.
Our results are obtained by connecting the moments of an order 2 Rademacher chaos
to the combinatorial properties of random Eulerian multigraphs. Estimating the
chance that a random multigraph is composed of a given number of node-disjoint
Eulerian components leads to a new tail bound on the chaos. Our
estimates may be of independent interest, and as this part of the argument is
decoupled from the analysis of the coefficients of the chaos, we believe that
our methods can be useful in the analysis of other chaoses.
\end{abstract}

\thispagestyle{empty}
\end{titlepage}

\section{Introduction}

The celebrated flattening lemma of Johnson and Lindenstrauss~\cite{0539.46017}
has numerous applications in pure mathematics, data analysis, signal processing,
computational linear algebra, and machine learning. Informally, the lemma states
that a random linear transformation mapping $\RR^d$ to $\RR^k$, where
$k = O({1\over \epsilon^2}\log(1/\delta))$, preserves the $L_2$-norm of
any $x\in\RR^d$ up to a factor of $(1\pm \epsilon)$ with probability at least
$1 - \delta$. The original argument uses a projection onto a random linear subspace.
However, it turns out that many simpler transformations work just as
well~\cite{48193,639795,276876,861189,1400129}.
In particular, a $k\times d$ matrix of $\{-1,0,+1\}$ i.i.d. entries, and
in fact any sub-Gaussian i.i.d. entries, works~\cite{861189,1400129}.
What makes the lemma particularly useful
are its linearity and the fact that the target dimension $k$ depends only on $\eps$
and $\delta$ but not on $d$. Alon~\cite{Alon200331} gave a lower bound on $k$
demonstrating that the above upper bound is nearly the best possible.

Due to its application in practice, speeding up the computation of a
Johnson-Lindenstrauss style dimension reduction beyond the trivial
$O(dk)$ arithmetric operations per vector is an important question.
Achlioptas~\cite{861189}, then Matou\v{s}ek~\cite{1400129},
gained constant factors by using a sparse matrix. In their ground
breaking work, Ailon and Chazelle~\cite{1132597} designed a
fast Johnson-Lindenstrauss transform (FJLT) that asymptotically
beats the $O(dk)$ bound. Their approach of first applying a
preconditioner that ``smears'' input vectors to some extent, then
using a structured linear transformation that works well on
smeared vectors, is prevalent in followup work.
Ailon and Liberty~\cite{1347083} gave a better FJLT, whose running
time is $O(d\log k)$ arithmetic operations per input vector. Further
results in this vein were given in~\cite{1429833,1347083}.

Recently, Dasgupta, Kumar, and Sarlos~\cite{1806737} revisited the
question of designing a sparse JL transform. This is motivated by
the desire to speed up the computation when applied using a small
$\eps$ to sparse input vectors, a scenario that comes up in applications.
They construct a random $k\times d$ transformation matrix with
$c = O\left(\frac 1 \epsilon \log(1/\delta)\log^2(k/\delta)\right)$
non-zero entries per column. They use a trivial deterministic
preconditioner $P$ that duplicates each coordinate $c$ times and
rescales. The choice of $c$ governs the sparsity of the matrix.
The novelty of their approach lies in the construction of the
projection matrix, whose
entries are not independent. This allows them to overcome a lower
bound of $\tilde{\Omega}(\epsilon^{-2})$ on the sparsity of a JL
transform matrix with independent entries~\cite{1400129}. They
construct the projection matrix as follows: pick $\zeta \in \{-1, 1\}^d$
and a hash function $h:[cd]\rightarrow [k]$ uniformly at random.
The $k\times cd$ projection matrix $H$ has $H_{i,j} = \zeta_i\one_{h(i)=j}$.
Notice that $H$ has a single non-zero entry per column, and the
entire transformation $HP$ has $c$ non-zero entries per column.
Kane and Nelson~\cite{DBLP:journals/corr/abs-1006-3585} improve the analysis of this scheme.
They show that taking $c = O\left(\frac 1 \epsilon \log(1/\delta)\log(k/\delta)\right)$
is sufficient.

We provide alternative, tighter, analysis of this scheme and show that
it is sufficient to set
$$c = O\left(\frac 1 \eps \left({\log(1/\delta)\log\log\log(1/\delta) \over
\log\log(1/\delta)}\right)^2\right).$$
In both previous
papers, as well as this work, the starting point is the same:
the argument boils down to
analyzing the distribution of an order $2$ Rademacher chaos
$Z = \sum_{1\le i<j\le d} a_{ij}\zeta_i \zeta_j$, where the coefficients
$a_{ij}$ are derived from the hash function $h$ and the projected
vector $x$. In particular, showing that the transform works for a
particular choice of $c$ boils down to proving a tail inequality
bounding the probability that $Z$ deviates from 0.
We prove such a tail inequality by bounding a judiciously chosen
large even moment of $Z$.

Notice that the monomials in the expansion of $\E[Z^{2m}]$
are (sums of) products of terms in the sum defining $Z$.
As each term involves two indices $i,j$, there is a correspondence
between monomials and graphs on $\{1,2,\dots,d\}$. The
non-zero monomials correspond to graphs where all nodes
have even degree, in other words: unions of node-disjoint
Eulerian graphs. The previous papers resorted to existing
measure concentration inequalities. They implicitly related the
moments to the weight of a subset of the monomials where
the graphs are composed of pairs of parallel edges, and thus
used the combinatorial structure only partially. This approach
seems to hit a barrier when
$c = o\left(\frac 1 \epsilon \log^2(1/\delta)\right)$.

In order to overcome this barrier, we fully exploit the combinatorial
structure of the monomial terms in the expansion of $\E[Z^{2m}]$.
In particular, we prove non-trivial bounds on the probability that
a random multigraph is the union of a given number of disjoint
Eulerian components (the difficulty stems from the fact that this
is not a monotone property). These bounds may be of independent
interest. Moreover, our analysis of the combinatorial structure of
the monomials is decoupled from the use of the specific properties
of the coefficients of the chaos that lead to the specific tail inequality
that we get. Therefore, our methods are likely to be useful in the analysis
of other order 2 Rachemacher chaoses.

Kane and Nelson~\cite{DBLP:journals/corr/abs-1006-3585} also reduce the required amount
of randomness, as compared to the original construction of~\cite{1806737}.
Our analysis also further improves slightly the bound on the
randomness needed. We need $O(\log(1/\delta))$-wise independent
vectors, whereas Kane and Nelson use $O(\log(k/\delta))$-wise independent
vectors.

\subsection*{Definitions, Assumptions and Main Results}

Let $0< \delta, \epsilon < 1$ be two parameters. We assume that $\epsilon \le \log^{-2}{(\delta^{-1})}$.
Define $m = O(\log \delta^{-1})$ and $k =
O(\epsilon^{-2}m)$. Define $C = O(\epsilon^{-1} \left({m\over F(m)}\right)^2)$ for some function $F$ such that $F(m) =O({\log{m}\over \log\log(m)})$. Let $H:[d] \mapsto [k]$ be a
random function and let $\zeta \in \{-1,1\}^d$ be a random vector. Both vectors have $O(m)$-wise independent entries.
Let $x \in R^d$ be a fixed vector such that $||x||_2 = 1$ and $||x||_{\infty}
\le C^{-0.5}$.
Define
$$
Z_t = \sum_{i\neq j\in [d]} x_ix_j \zeta_i \zeta_j \one_{\{H(i) = H(j) = t\}}, \ \ \ \
Z = \sum_{t=1}^k Z_t.
$$
We note that for fixed $H$ each variable $Z_t$ can be seen as a particular case of Rademacher chaos.
Rademacher chaos of order $2$ is defined as a random variable of the form $\sum_{i\neq j\in [d]} a_{i,j} \zeta_i \zeta_j$.
Thus, we consider a special case when $a_{i,j} = x_ix_j$. There are many bounds for
Rademacher chaos, such as Bonami inequality \cite{Bonami70ddd7} and others, see e.g., Blei and Janson \cite{springerlink:10.1007/BF02385577}, Hanson and Wright \cite{1971HansonWright},
{Lata{\l}a \cite{MR1686370}.
In particular, they can be applied for each $Z_t$ for fixed $H$. However, there are two issues with applying general
inequality in our setting. First, we might loose precision, when applied directly
to a \emph{random} sum of (defined by $H$ ) of Rademacher chaoses $Z$. Second, we can employ the
structure of $a_{i,j}$ to achieve better bounds.
Our main technical result is a new tail probability inequality for a random sum of Rademacher chaoses of the special form as above.
In particular we prove:

\noindent
\begin{theorem}\label{lm:main}
There exists an absolute constant $\alpha$ such that if $C > \alpha\epsilon^{-1} \left({m\over F(m)}\right)^2$ and $k>\alpha\epsilon^{-2}m$ then:
$E(Z^{2m}) \le (0.1\epsilon)^{2m}$. Further, there exists an absolute constant $\gamma$ such that
$$
P(|Z| \ge \epsilon) \le \gamma\delta.
$$
\end{theorem}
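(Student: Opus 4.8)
The plan is to prove the moment bound $\E[Z^{2m}]\le(0.1\epsilon)^{2m}$ and then deduce the tail bound from it. The deduction is immediate: since $Z^{2m}\ge 0$, Markov's inequality gives $P(|Z|\ge\epsilon)=P(Z^{2m}\ge\epsilon^{2m})\le\epsilon^{-2m}\E[Z^{2m}]\le(0.1)^{2m}=(0.01)^m=e^{-m\ln 100}$, and since $m=\Theta(\log\delta^{-1})$ one may take the constant hidden in $m$ at least $1/\ln 100$ so that this is at most $\delta\le\gamma\delta$. So the whole theorem reduces to the moment estimate.

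For the moment estimate I would expand $Z=\sum_{i\ne j}x_ix_j\zeta_i\zeta_j\one_{\{H(i)=H(j)\}}$ and write $Z^{2m}$ as a sum over $2m$-tuples of ordered pairs of distinct indices $(i_\ell,j_\ell)$; each tuple is a sequence of $2m$ edges and determines a loopless multigraph $G$ on $[d]$. Using that $\zeta$ and $H$ are independent, each $O(m)$-wise independent, and each term involves at most $4m$ coordinates, the expectation of a term factors as $\bigl(\prod_u x_u^{\deg_G(u)}\bigr)\cdot\E_\zeta\bigl[\prod_u\zeta_u^{\deg_G(u)}\bigr]\cdot\E_H\bigl[\prod_\ell\one_{\{H(i_\ell)=H(j_\ell)\}}\bigr]$. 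The $\zeta$-factor is $1$ when every vertex of $G$ has even degree and $0$ otherwise, so only multigraphs that are disjoint unions of Eulerian components survive; since every such vertex has degree $\ge 2$, a surviving $G$ has $v(G)\le 2m$ vertices, and the $H$-factor is the probability that $H$ is constant on each of the $c(G)$ components, namely $k^{\,c(G)-v(G)}$. Hence
\[
\E[Z^{2m}]=\sum_{G\ \mathrm{Eulerian},\ 2m\ \mathrm{edges}}\Bigl(\prod_u x_u^{\deg_G(u)}\Bigr)\,k^{\,c(G)-v(G)},
\]
with $G$ ranging over the multigraphs arising from an ordered $2m$-edge sequence.

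Next I would bound the weight of a shape. Since each $\deg_G(u)$ is even and $\|x\|_\infty\le C^{-1/2}$, writing $x_u^{\deg_G(u)}=x_u^2\cdot(x_u^2)^{\deg_G(u)/2-1}$ gives $\prod_u x_u^{\deg_G(u)}\le C^{-(2m-v(G))}\prod_u x_u^2$, and since $\|x\|_2=1$ the sum of $\prod_u x_u^2$ over all labelings of the $v(G)$ vertices is at most $1$. Grouping the surviving $G$'s by isomorphism type — equivalently, drawing the $v$ vertices i.i.d.\ from the distribution $x^2$ and placing the $2m$ edges at random — reduces the task to bounding $\sum_{v,c}C^{-(2m-v)}k^{\,c-v}$ times the number of ordered Eulerian edge-sequences with $v$ vertices, $c$ components and $2m$ slots, i.e.\ a weighted count of the event that a random multigraph splits into exactly $c$ node-disjoint Eulerian pieces. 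Establishing a sharp enough bound on this count is, I expect, the main obstacle: "union of $c$ Eulerian components'' is not a monotone property, so the enumeration is delicate and the crude $v^{4m}$-type bound is far too weak; I would isolate a combinatorial lemma showing that, relative to the most degenerate configuration, each extra edge costs only a polynomial-in-$m$ factor.

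Granting such a lemma, the rest is bookkeeping. The dominant contribution comes from the shapes that are a disjoint union of $m$ pairs of parallel edges ($v=2m$, $c=m$); summing over these reduces to $\bigl(\Theta(m/k)\bigr)^m$, an absolute constant to the power $m$ times $\epsilon^{2m}$, which is at most $(0.1\epsilon)^{2m}$ once the constant hidden in $k=\Theta(\epsilon^{-2}m)$ is taken large enough — exactly as the elementary bound $\E[Z^2]\le 2/k$ already indicates. Every other shape is obtained from this one by merging or identifying pieces, and each such step is charged either a power of $C^{-1}=\Theta(\epsilon F(m)^2/m^2)$ (for each vertex lost) or a power of $k^{-1}=\Theta(\epsilon^2/m)$ (for each further drop in $v-c$); by the standing assumption $\epsilon\le\log^{-2}\delta^{-1}$ together with the choice $F(m)=O(\log m/\log\log m)$, these factors comfortably absorb the polynomial-in-$m$ losses from the combinatorial lemma, so the sum over $(v,c)$ is a geometric series dominated by its leading term. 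This yields $\E[Z^{2m}]\le(0.1\epsilon)^{2m}$, and the theorem follows; the precise form of $F$ is exactly what is needed to win this last trade-off with room to spare.
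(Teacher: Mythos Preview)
Your high-level plan coincides with the paper's: expand $Z^{2m}$, keep only the monomials whose associated multigraph has all even degrees, bound the vertex-weight factor by $C^{-(2m-v)}\prod_u x_u^2$, use $\|x\|_2=1$ to sum out the labels, and reduce the moment to a sum over pairs $(i,t)=(v(G),c(G))$ weighted by $k^{\,t-i}C^{\,i-2m}$ times the count $|W_{[i],t}|$ of ordered Eulerian edge-sequences with those parameters. The Markov step for the tail is also the paper's.

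The gap is that you do not prove the combinatorial count, and this is precisely the theorem's content. You write ``Granting such a lemma, the rest is bookkeeping'' and suggest that each departure from the extremal configuration costs only a polynomial-in-$m$ factor; but that heuristic is not what the paper establishes, and it is not clear it would suffice. The paper proves the specific bound
\[
|W_{[i],t}|\;\le\;(O(1))^{2m}\,i!\,m^{\,4m+i-5t}\,F(m)^{-(4m-2i)},
\]
and does so by a genuine case split: for $t<0.39\,i$ the crude bound $|W_{[i],t}|\le i^{4m}$ already wins; for $t>0.39\,i$ one first shows $G$ must have at least $3t-i$ components of size two, then counts sequences by first placing those forced double-edges and bounding the remaining slots, splitting further into ``medium $t$'' ($z^2>2(3t-i)$, $z=i-2t$) and ``large $t$'' ($z^2<2(3t-i)$) to control the size of the residual edge set. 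The exponents $4m+i-5t$ and $4m-2i$ are exactly what makes the trade-off against $k^{i-t}C^{2m-i}$ close up, and the role of $F(m)=O(\log m/\log\log m)$ is pinned down by the inequality $F(m)^{F(m)}\le m^{0.01}$ used in the medium-$t$ case. Your proposal identifies the right obstacle but supplies none of this; without a concrete bound on $|W_{[i],t}|$ the argument does not go through.
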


Thus, we give an improvement to Theorem $2$ from \cite{1806737} and Theorem $10$ from \cite{DBLP:journals/corr/abs-1006-3585}.
It is important to emphasize  the difference between our approach and that of \cite{1806737,DBLP:journals/corr/abs-1006-3585}
Both previous works
 first bound $Z_i$ using known tail bounds for a Rademacher chaos  and then take a union bound for summing the error of all $Z_{1\leq i\leq t}$ in order to upper bound $Z$. We, in contrast provide a new tail inequality.

Next, we note that theorem \ref{lm:main} immediately implies the following, by repeating the arguments from \cite{1806737}:

\begin{theorem}\label{lm:main JL}
There exists a universal constant $\gamma$ and a distribution $\mathcal{D}$ over $k\times d$ matrices with real-valued elements such that if $M \sim \mathcal{D}$ then
for any fixed $x\in R^d$ the following is true. First,
$$
P((1-\epsilon)||x||_2 \le ||Mx||_2 \le (1+\epsilon)||x||_2 ) \ge 1-\gamma\delta.
$$
Second, $Mx$ can be computed in time
$$
O({1\over\epsilon} \left({\log(1/\delta)\log\log\log(1/\delta) \over \log\log(1/\delta)}\right)^2||x||_0).
$$
Third, $M$ can be constructed using vectors with $O(\log(1/\delta))$-independent entries.
\end{theorem}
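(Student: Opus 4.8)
The plan is to realize $\mathcal{D}$ through the Dasgupta--Kumar--Sarlos construction and to deduce all three claims from Theorem~\ref{lm:main}. Let $P$ be the fixed $Cd\times d$ preconditioner that replaces each coordinate $x_i$ by $C$ copies of $x_i/\sqrt{C}$, and let $H$ be the random $k\times Cd$ matrix whose column $\ell$ has its unique nonzero entry, equal to $\zeta_\ell$, in row $H(\ell)$, where $H:[Cd]\to[k]$ is a random function and $\zeta\in\{-1,1\}^{Cd}$ a random sign vector, both with $O(m)$-wise independent entries and independent of each other as in the hypotheses of Theorem~\ref{lm:main}. Define $M=HP$, a $k\times d$ matrix, and let $\mathcal{D}$ be its law. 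Since $M$ is linear, it suffices to prove the first claim when $\|x\|_2=1$: in that case $y:=Px$ satisfies $\|y\|_2=\|x\|_2=1$ and $\|y\|_\infty=\|x\|_\infty/\sqrt{C}\le\|x\|_2/\sqrt{C}=C^{-1/2}$, so $y$, viewed in dimension $Cd$, meets exactly the two conditions imposed on the fixed vector in Theorem~\ref{lm:main}.

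Next comes the elementary identity linking $\|Mx\|_2$ to the chaos. Expanding $\|My\|_2^2=\sum_{t=1}^k(\sum_\ell \zeta_\ell y_\ell\one_{\{H(\ell)=t\}})^2=\sum_t\sum_{\ell,\ell'}y_\ell y_{\ell'}\zeta_\ell\zeta_{\ell'}\one_{\{H(\ell)=H(\ell')=t\}}$ and splitting off the diagonal $\ell=\ell'$, the diagonal contributes $\sum_\ell y_\ell^2=\|y\|_2^2=1$ (using $\zeta_\ell^2=1$ and $\sum_t\one_{\{H(\ell)=t\}}=1$), while the off-diagonal part is precisely the random sum of Rademacher chaoses $Z=\sum_{t=1}^k Z_t$ of Theorem~\ref{lm:main}. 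Hence $\|Mx\|_2^2=1+Z$. Theorem~\ref{lm:main} gives $P(|Z|\ge\epsilon)\le\gamma\delta$, so with probability at least $1-\gamma\delta$ we have $1-\epsilon\le\|Mx\|_2^2\le 1+\epsilon$; taking square roots and using $\sqrt{1-\epsilon}\ge 1-\epsilon$ and $\sqrt{1+\epsilon}\le 1+\epsilon$ for $0<\epsilon<1$, then rescaling $x$ back by $\|x\|_2$ via linearity, yields $(1-\epsilon)\|x\|_2\le\|Mx\|_2\le(1+\epsilon)\|x\|_2$, the first assertion.

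For the running time, $Px$ has at most $C\|x\|_0$ nonzero entries, since each nonzero coordinate of $x$ produces $C$ of them, and applying $H$ touches each such entry exactly once because every column of $H$ has a single nonzero, so $Mx$ is computed in $O(C\|x\|_0)$ arithmetic operations. Substituting the standing choices $m=O(\log(1/\delta))$ and $F(m)=O(\log m/\log\log m)$ into $C=O(\epsilon^{-1}(m/F(m))^2)$ gives $C=O(\frac{1}{\epsilon}(\log(1/\delta)\log\log\log(1/\delta)/\log\log(1/\delta))^2)$, the claimed bound. For the randomness, observe that every monomial appearing in the expansion of $\E[Z^{2m}]$ involves at most $4m$ distinct coordinates of $\zeta$ and at most $4m$ evaluations of $H$, so $4m$-wise, hence $O(m)=O(\log(1/\delta))$-wise, independence of the entries of $\zeta$ and of $H$ is enough to run the entire computation underlying Theorem~\ref{lm:main}, and such vectors are constructible from a seed of that order.

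I expect no genuinely hard step here: once Theorem~\ref{lm:main} is available, this is the bookkeeping of~\cite{1806737}. The only points demanding care are that the duplication factor of the preconditioner must be taken equal to $C$ so that $\|Px\|_\infty$ meets the $C^{-1/2}$ threshold of Theorem~\ref{lm:main} exactly, and that inflating the intermediate dimension to $Cd$ leaves the target dimension $k=O(\epsilon^{-2}m)$ untouched, since it depends only on $\epsilon$ and $\delta$; the standing assumption $\epsilon\le\log^{-2}(1/\delta)$ is what keeps the various parameter choices mutually consistent.
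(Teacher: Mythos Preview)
Your proposal is correct and follows exactly the route the paper indicates: the paper does not give a separate proof of Theorem~\ref{lm:main JL} but simply states that it follows from Theorem~\ref{lm:main} ``by repeating the arguments from~\cite{1806737},'' which is precisely the DKS construction $M=HP$, the identity $\|Mx\|_2^2=1+Z$, and the sparsity/randomness bookkeeping you spell out. One tiny slip: where you write ``Expanding $\|My\|_2^2$'' you mean $\|Mx\|_2^2$ (equivalently $\|Hy\|_2^2$), since $My=HP(Px)$ is not the object under consideration; also note that you have overloaded $H$ to denote both the hash function and the projection matrix, which is harmless but worth disambiguating.
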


\subsection{An Informal Explanation}

We take a direct approach to the above problem and try to estimate the moments of $Z$ directly.
That is, we write
$$
Z = \sum_{1\le i<j\le d} x_ix_j\zeta_i \zeta_j \left(\sum_{t=1}^k \one_{\{H(i) = H(j) = t\}}\right).
$$
Further, $Z^{2m}$ can be seen as a sum of all possible monomials which can be constructed from $2m$ elements of the form
$x_ix_j\zeta_i \zeta_j \left(\sum_{t=1}^k \one_{\{H(i) = H(j) = t\}}\right)$.
Thus, we group the terms according to certain criteria and estimate the expectation of term inside each group differently.
It turns out that each monomial with positive expectation corresponds to a multigraph with positive and even degrees.
The expectation depends on the number of connected components of such graphs.
That is, we reduce the problem of estimating moments of $Z$ to the question of how many multigraphs can be constructed
for a given subset of vertices $\{1,2,\dots,i\}$ and a given number of connected components $t$.
It is not hard to see that $t\le i/2$ for graphs with even degrees.
Also, note that there is a direct upper bound on the number of such sequences that is $i^{4m}$.

Informally, we employ the following intuitive fact. If the multigraph has a small number of connected components, then
the total probability of such a graph is very small.
On the other hand, if there are many connected components, then the graph should be sparse with $o(i^{2})$ edges and thus better bounds are possible.
The main technical work is to prove that for \emph{any} number of components, the combined influence of probability and sparsity
in fact gives the required bound.

\section{Reduction to Graphs}
Let $S$ be a sequence of pairs $S = \{S_1,\dots, S_{2m}\}$ where $S_i = \{S_{i,1}, S_{i,2}\}$ such that $1\le S_{i,1}< S_{i,2}\le d$.
Define $A$ to be a set of all such sequences.
Define a random variable
$$
R_S = \prod_{i = 1}^{2m} \left(x_{S_{i,1}}x_{S_{i,2}} \zeta_{S_{i,1}} \zeta_{S_{i,2}} \left(\sum_{t=1}^k \one_{\{H({S_{i,1}}) = H({S_{i,2}}) = t\}}\right)\right).
$$

\begin{fact}\label{fct:4}
$
E(Z^{2m}) = 2^{2m}\sum_{S\in A} E(R_S).
$
\end{fact}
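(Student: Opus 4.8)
The plan is to expand $Z^{2m}$ directly using the definition $Z = \sum_{1\le i<j\le d} x_ix_j\zeta_i\zeta_j\left(\sum_{t=1}^k \one_{\{H(i)=H(j)=t\}}\right)$ and identify the resulting monomials with the random variables $R_S$. First I would write $Z$ as a sum over unordered pairs $\{i,j\}$ with $i<j$; equivalently, we may index the sum by the set $P = \{\{i,j\} : 1\le i<j\le d\}$, writing $Z = \sum_{p\in P} Y_p$ where $Y_p = x_{p_1}x_{p_2}\zeta_{p_1}\zeta_{p_2}\left(\sum_{t=1}^k \one_{\{H(p_1)=H(p_2)=t\}}\right)$ and $p = \{p_1,p_2\}$ with $p_1<p_2$. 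Note that this is exactly the same as the ``$i\ne j$'' double sum defining $Z$ divided by $2$, since each unordered pair is counted twice (the summand $x_ix_j\zeta_i\zeta_j\one_{\{H(i)=H(j)=t\}}$ is symmetric in $i,j$); hence $Z = \tfrac12\sum_{i\ne j}(\cdots)$ is consistent with $Z = \sum_{p\in P}Y_p$. Wait — I should double check the normalization: the theorem statement defines $Z_t = \sum_{i\ne j} x_ix_j\zeta_i\zeta_j\one_{\{H(i)=H(j)=t\}}$ over ordered pairs, so $Z = \sum_t Z_t = \sum_{i\ne j}(\cdots) = 2\sum_{p\in P}Y_p$. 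This factor of $2$ is precisely the source of the $2^{2m}$ in the statement.

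Next I would raise this to the $2m$-th power: $Z^{2m} = \left(2\sum_{p\in P}Y_p\right)^{2m} = 2^{2m}\sum_{(p^{(1)},\dots,p^{(2m)})\in P^{2m}} \prod_{i=1}^{2m} Y_{p^{(i)}}$, where the sum ranges over all ordered $2m$-tuples of unordered pairs. Such an ordered tuple is exactly an element $S\in A$: writing $S_i = \{S_{i,1},S_{i,2}\}$ with $S_{i,1}<S_{i,2}$ recovers the definition of $A$ given in the paper, and $\prod_{i=1}^{2m} Y_{S_i}$ is literally the definition of $R_S$. Therefore $Z^{2m} = 2^{2m}\sum_{S\in A} R_S$ as an identity of random variables.

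Finally I would take expectations of both sides. Since $A$ is a finite index set (there are at most $\binom{d}{2}^{2m}$ terms) and each $R_S$ is a bounded random variable, linearity of expectation applies term by term, giving $\E[Z^{2m}] = 2^{2m}\sum_{S\in A}\E[R_S]$, which is the claim. There is essentially no obstacle here — the only point requiring care is the bookkeeping of the symmetry factor (ordered versus unordered pairs) that produces $2^{2m}$ rather than $1$ or $(2m)!$, and confirming that the inner sum over $t$ is carried along unchanged inside each factor. The real content of the argument lies in the subsequent analysis of $\E[R_S]$ via the associated multigraphs, not in this purely algebraic rearrangement.
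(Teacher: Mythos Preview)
Your argument is correct and is exactly the paper's approach: the paper's proof consists of the single observation that $Z = 2\sum_{1\le i<j\le d} x_ix_j\zeta_i\zeta_j\bigl(\sum_{t=1}^k \one_{\{H(i)=H(j)=t\}}\bigr)$ (the symmetry factor you identified), after which expanding the $2m$-th power and taking expectations gives the fact immediately. Your write-up simply spells out the expansion and the identification with $R_S$ in more detail than the paper bothers to.
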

\begin{proof}
We can rewrite:
$$
Z = 2\sum_{1\le i<j\le d} x_ix_j\zeta_i \zeta_j \left(\sum_{t=1}^k \one_{\{H(i) = H(j) = t\}}\right).
$$
The fact follows.
\end{proof}

\begin{definition}
Let $G$ be an undirected connected multigraph with $G=(V, E)$ and $V \subseteq [d]$. Define
$WEIGHT(G) = 0$ if $G$ has at least one vertex with an odd degree and otherwise define
$$
WEIGHT(G) = {1\over k^{|V|-1}}\prod_{v\in V} x_v^{deg(v)}.
$$
Let $G$ be an undirected multigraph and let $G_1,\dots, G_t$ be the connected components of $G$. Define
$$
WEIGHT(G) = \prod_{i=1}^t WEIGHT(G_i).
$$
\end{definition}

\begin{definition}
Let $V \subseteq [d]$. Define
$$
SQUARES(V) = \prod_{v\in V} x_v^2.
$$
\end{definition}

\begin{definition}\label{def:g(s)}
Let $S\in A$. Define $G(S)$ to be the following undirected multigraph.
Vertices of the graph are the numbers that appear in the sequence $S$. That is, the set of vertices of $G(S)$ is $\{v\in [d]: \exists i \in [2m], j\in \{1,2\}  S_{i,j} = v\}$. The multiset of edges of $G(S)$ consists of all edges of the form $(S_{i,1}, S_{i,2})$.
\end{definition}

\begin{definition}
Let $G$ be a multigraph with vertices in $[d]$. Define $Ver(G)$ to be the set of all vertices of $G$ with positive degree.
Define $Edg(G)$ to be a multiset of all edges of $G$.
\end{definition}

\begin{lemma}
$$
E(R_S) = WEIGHT(G(S)).
$$
\end{lemma}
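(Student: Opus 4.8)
The plan is to compute $\E(R_S)$ directly by exploiting the independence between the three sources of randomness — the signs $\zeta$, the hash function $H$ — and by splitting the product defining $R_S$ across the connected components of $G(S)$. First I would expand
$$
R_S = \left(\prod_{i=1}^{2m} x_{S_{i,1}}x_{S_{i,2}}\right)\left(\prod_{i=1}^{2m}\zeta_{S_{i,1}}\zeta_{S_{i,2}}\right)\left(\prod_{i=1}^{2m}\sum_{t=1}^k \one_{\{H(S_{i,1})=H(S_{i,2})=t\}}\right),
$$
and regroup each of the three factors by the vertex/edge multiset of $G=G(S)$. The deterministic $x$-factor becomes $\prod_{v\in Ver(G)} x_v^{\deg(v)}$. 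The sign factor becomes $\prod_{v\in Ver(G)}\zeta_v^{\deg(v)}$, whose expectation (using that the $\zeta_v$ are $\pm1$ with $O(m)$-wise independence, which covers the at most $2m$ distinct vertices appearing) is $1$ if every degree is even and $0$ otherwise — this is exactly the $WEIGHT=0$ clause of the definition.

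Next, conditional on the signs being "good" (all degrees even), I would handle the hash-function factor. The key observation is that the product of indicator sums, when expanded, forces every pair $S_i=\{S_{i,1},S_{i,2}\}$ — i.e. every edge of $G$ — to have both endpoints hashed to a common value $t_i$; and since $G$ restricted to a connected component is connected, all vertices in that component must receive the same hash value. Hence the hash contribution factorizes over the connected components $G_1,\dots,G_\ell$ of $G$, and within a component $G_r=(V_r,E_r)$ the event "all of $V_r$ is hashed to a single value" has probability $k\cdot(1/k)^{|V_r|} = k^{-(|V_r|-1)}$, again using that $|V_r|\le 2m$ so the $O(m)$-wise independence of $H$ suffices. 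Multiplying this across components and recombining with the $x$-factor (which also splits as $\prod_r \prod_{v\in V_r} x_v^{\deg(v)}$) yields exactly $\prod_r WEIGHT(G_r) = WEIGHT(G)$ in the even-degree case, and $0$ otherwise, matching the definition.

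The main obstacle — really the only point requiring care — is justifying the factorization of the hash-indicator product over connected components and the clean count of the "monochromatic component" probability: one must argue that after expanding $\prod_i \sum_{t_i} \one_{\{\cdots\}}$ the only surviving assignments $(t_i)$ are those constant on each component, and that no overcounting occurs because each edge fixes its own $t_i$. I would also double-check the independence bookkeeping: $G(S)$ has at most $2m$ vertices and at most $2m$ edges, so $O(m)$-wise independence of both $\zeta$ and $H$ is exactly what is needed for the expectations to behave as in the fully independent case. Isolated vertices (those in $[d]$ but not in $Ver(G)$) contribute nothing, consistent with the $WEIGHT$ definition ranging only over $V=Ver$ of each component. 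Everything else is routine algebra.
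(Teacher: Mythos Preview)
Your proposal is correct and follows essentially the same approach as the paper: separate the deterministic $x$-factor, use the parity of degrees to handle the $\zeta$-factor (the paper isolates one odd-degree $\zeta_v$ rather than computing $\E\prod_v\zeta_v^{\deg(v)}$, but this is the same point), and then reduce the hash-indicator product on each connected component to the event that all its vertices collide, giving $k^{-(|V_r|-1)}$. Your version is somewhat more explicit about the $O(m)$-wise independence bookkeeping and the expansion of $\prod_i\sum_{t_i}$, but there is no substantive difference in strategy.
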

\begin{proof}
Definition \ref{def:g(s)} implies that all vertices of $G(S)$ have positive degree.
It follows that $G(S)$ has a vertex $v$ with an odd degree if and only if
$x_v$ has an odd degree in $R_S$.
In this case we can write $R_S$ as $\zeta_vL$ where $L$ is independent of $\zeta$ and thus
$E(R_S) = 0 =  WEIGHT(G(S))$.

Consider the case when $G(S)$ has only vertices with positive and even degree. First, let us assume that $G(S)$ is connected.
$$
E(R_S) = \prod_{v\in V} x_v^{deg(v)} E(\prod_{i = 1}^{2m} \left(\sum_{t=1}^k \one_{\{H({S_{i,1}}) = H({S_{i,2}}) = t\}}\right)).
$$
Since $G(S)$ is connected we have
$$
E(\prod_{i = 1}^{2m} \left(\sum_{t=1}^k \one_{\{H({S_{i,1}}) = H({S_{i,2}}) = t\}}\right)) = E(\sum_{t=1}^k\prod_{v\in V} \one_{H(v) = t}) = {1\over k^{|V|-1}}.
$$
The case when $G$ has more than one connected component is proven by repeating the above arguments for each connected components and
by noting that the random variables that correspond to components are independent.
\end{proof}


\begin{definition}
For $Q \subseteq [d]$ and define $W_{Q,t}$ to be set of all sequences $S$ such that $Ver(G(S))=Q$, such that $G(S)$ has $t$ connected
components and such that all degrees in $G(S)$ are positive and even.
By symmetry, for any $Q \neq Q'$ such that $|Q| = |Q'|$ we have $|W_{Q,t}|=|W_{Q',t}|$.
\end{definition}

\begin{lemma}
Let $S\in W_{[i],t}$. Then
$$
WEIGHT(G(S)) \le {1\over k^{i-t}}{1\over C^{2m-i}}SQUARES( Ver(G(S))).
$$
\end{lemma}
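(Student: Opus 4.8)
The plan is to unwind the definitions of $WEIGHT$ and $SQUARES$ and account separately for the power of $k$, the power of $C$, and the residual product of $x_v$'s. Fix $S \in W_{[i],t}$ and let $G = G(S)$ with connected components $G_1,\dots,G_t$ on vertex sets $V_1,\dots,V_t$. Since every vertex of $G$ has positive even degree, $WEIGHT(G) = \prod_{\ell=1}^t WEIGHT(G_\ell) = \prod_{\ell=1}^t k^{-(|V_\ell|-1)} \prod_{v\in V_\ell} x_v^{\deg(v)}$. Summing the exponents, $\sum_\ell (|V_\ell|-1) = i - t$, so the $k$-power is exactly $k^{-(i-t)}$, matching the claimed factor. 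It remains to show $\prod_{v\in[i]} x_v^{\deg(v)} \le C^{-(2m-i)} \prod_{v\in[i]} x_v^2$.

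First I would split $\prod_v x_v^{\deg(v)} = \big(\prod_v x_v^2\big)\cdot\big(\prod_v x_v^{\deg(v)-2}\big)$, which is legitimate because each $\deg(v)\ge 2$ (positive even degree), so every exponent $\deg(v)-2$ is a nonnegative even integer and the second product is a product of nonnegative quantities. Then I would bound the second product using the hypothesis $\|x\|_\infty \le C^{-1/2}$: since $\deg(v)-2 \ge 0$, we have $x_v^{\deg(v)-2} = (x_v^2)^{(\deg(v)-2)/2} \le (C^{-1})^{(\deg(v)-2)/2} = C^{-(\deg(v)-2)/2}$. Multiplying over $v\in[i]$ gives $\prod_v x_v^{\deg(v)-2} \le C^{-\frac12 \sum_v (\deg(v)-2)}$.

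The key counting step is then just the handshake identity: each $S\in W_{[i],t}$ has exactly $2m$ edges (one per pair $S_1,\dots,S_{2m}$), counted with multiplicity, so $\sum_{v\in[i]} \deg(v) = 2\cdot(2m) = 4m$, hence $\sum_{v\in[i]}(\deg(v)-2) = 4m - 2i$ and $\frac12\sum_v(\deg(v)-2) = 2m - i$. This yields $\prod_v x_v^{\deg(v)-2} \le C^{-(2m-i)}$, and combining with the isolated $\prod_v x_v^2 = SQUARES(Ver(G(S)))$ and the $k^{-(i-t)}$ factor computed above gives exactly the claimed inequality. (We also implicitly use $2m - i \ge 0$, i.e. $t \le i/2 \le m$, which holds since each component with even positive degrees has at least one cycle and hence $|V_\ell|\ge 2$ wait—more simply, $i = \sum_\ell |V_\ell|$ and $2m = |Edg(G)| \ge \sum_\ell |V_\ell| = i$ because a connected even multigraph on $|V_\ell|$ vertices has at least $|V_\ell|$ edges; if $2m - i$ were negative the exponent bound would still be vacuously fine since all $x_v^2 \le 1$, but the stated form presumes $2m \ge i$.)

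I don't expect a genuine obstacle here: the only subtlety is making sure the factorization $x_v^{\deg(v)} = x_v^2 \cdot x_v^{\deg(v)-2}$ and the bound $x_v^{\deg(v)-2}\le C^{-(\deg(v)-2)/2}$ are applied only when $\deg(v)-2$ is a nonnegative even integer, which is guaranteed by the definition of $W_{[i],t}$ (all degrees positive and even). The one thing to state carefully is the edge count $|Edg(G(S))| = 2m$, which follows directly from Definition~\ref{def:g(s)}, and hence $\sum_v \deg(v) = 4m$.
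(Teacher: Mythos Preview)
Your proof is correct and follows essentially the same approach as the paper: compute the $k$-power as $k^{-(i-t)}$ from $\sum_\ell(|V_\ell|-1)=i-t$, then use $\deg(v)\ge 2$, $x_v^2\le C^{-1}$, and the handshake identity $\sum_v \deg(v)=4m$ to bound $\prod_v x_v^{\deg(v)}$ by $C^{-(2m-i)}\prod_v x_v^2$. The parenthetical digression on $2m\ge i$ is unnecessary for the argument and could be dropped.
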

\begin{proof}
By Definition
$$
WEIGHT(G(S)) = {1\over k^{i-t}}\prod_{v\in Ver(G(S))} x_v^{deg(v)}.
$$
Next, note the following. For every $v$ it is true that: $deg(v)\ge 2$ and $x^2_v\le C^{-1}$. Also, $\sum_{v\in V} deg(v) = 4m$. Thus, we conclude:
$$
WEIGHT(G(S)) \le {1\over k^{i-t}}{1\over C^{2m-i}}\prod_{v\in Ver(G(S))} x_v^{2} =
$$
$$
{1\over k^{i-t}}{1\over C^{2m-i}}SQUARES( Ver(G(S))).
$$

\end{proof}

\begin{fact}\label{fct:dfkj}
Let $S \notin \cup_{i=1}^{2m}\cup_{t=1}^{i/2} W_{[i],t}$.
Then $E(R_S) = 0$.
\end{fact}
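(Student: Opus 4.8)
The plan is to prove the contrapositive: I will show that if $E(R_S)\neq 0$, then $S$ already lies in $\bigcup_{i=1}^{2m}\bigcup_{t=1}^{i/2} W_{[i],t}$ (reading $W_{[i],t}$ up to the relabeling symmetry noted after its definition, i.e.\ as the family of all ``good'' sequences with $|Ver(G(S))|=i$ and $t$ connected components).

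First I would invoke the lemma proved just above, namely $E(R_S)=WEIGHT(G(S))$, so that it suffices to determine when $WEIGHT(G(S))\neq 0$. By definition the weight of a multigraph is the product of the weights of its connected components, and the weight of a connected multigraph is set to $0$ as soon as it has a vertex of odd degree; hence $WEIGHT(G(S))\neq 0$ forces every connected component of $G(S)$ --- and therefore every vertex of $G(S)$ --- to have even degree. Since Definition~\ref{def:g(s)} guarantees that every vertex of $G(S)$ has positive degree (the vertices are exactly the numbers occurring in $S$), we conclude that all degrees of $G(S)$ are positive and even, which is precisely the degree condition appearing in the definition of the sets $W_{Q,t}$.

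It then remains to do the elementary counting. Let $Q=Ver(G(S))$, let $i=|Q|$, and let $t$ be the number of connected components of $G(S)$. The edge multiset $Edg(G(S))$ has exactly $2m$ elements, so $\sum_{v\in Q} deg(v)=4m$; since $deg(v)\ge 2$ for every $v\in Q$, this gives $2i\le 4m$, i.e.\ $i\le 2m$. Moreover the two coordinates of each pair in $S$ are distinct, so $G(S)$ has no self-loops; hence every connected component contains at least two vertices, so $t\le i/2$. Therefore $S\in W_{Q,t}$ with $1\le i\le 2m$ and $1\le t\le i/2$, contradicting the hypothesis.

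I do not expect any genuine obstacle here: the statement is in essence a classification of which monomials can survive in the expansion of $E(Z^{2m})$. The only points needing mild care are (i) interpreting $WEIGHT(G(S))=0$ componentwise, so that ``nonzero weight'' is genuinely equivalent to ``no odd-degree vertex,'' and (ii) the two degree-sum inequalities $i\le 2m$ and $t\le i/2$ --- the latter relying on the observation that the constraint $S_{a,1}<S_{a,2}$ rules out self-loops and hence rules out singleton components.
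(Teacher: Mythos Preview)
Your proof is correct and essentially identical to the paper's: both arguments reduce to ``nonzero weight forces all degrees even'' and ``each component has at least two vertices, so $t\le i/2$,'' with your version simply phrased as the contrapositive and adding the explicit check $i\le 2m$ that the paper leaves implicit. Your parenthetical about reading $W_{[i],t}$ up to relabeling is also apt, since the paper's use of the fact in the next lemma really ranges over all $W_{Q,t}$ with $|Q|=i$.
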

\begin{proof}
Consider $S \notin \cup_{i=1}^{2m}\cup_{t=1}^{i} W_{[i],t}$.
Then $G(S)$ has at least one node of odd degree. It follows that $E(R_S) = 0$.

Further, we show that $W_{[i],t} = \emptyset$ for $t> i/2$. Indeed, consider $S\in W_{[i],t}$.
It follows that at least one of the connected components of $G(S)$ has exactly
one node. This contradicts the definition of sequences $S$. Thus, $W_{[i],t} = \emptyset$ and the fact follows.
\end{proof}

\begin{lemma}\label{lm:dkfkfq}
$$
E(Z^{2m}) \le 2^{2m}\sum_{i=1}^{2m} {1\over i!}\sum_{t=1}^{i/2}|W_{[i],t}|{1\over k^{i-t}}{1\over C^{2m-i}}.
$$
\end{lemma}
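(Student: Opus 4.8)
The plan is to assemble the bound from the pieces already established, organizing the sum over sequences $S\in A$ by the "shape" of the multigraph $G(S)$. By Fact~\ref{fct:4} we have $E(Z^{2m}) = 2^{2m}\sum_{S\in A} E(R_S)$, and by the preceding lemma $E(R_S) = WEIGHT(G(S))$. By Fact~\ref{fct:dfkj}, every $S$ with $E(R_S)\neq 0$ lies in some $W_{[i],t}$ up to relabeling of vertices, i.e. in some $W_{Q,t}$ with $|Q|=i\le 2m$ and $1\le t\le i/2$. So I would first write
$$
E(Z^{2m}) = 2^{2m}\sum_{i=1}^{2m}\sum_{t=1}^{i/2}\ \sum_{\substack{Q\subseteq[d]\\ |Q|=i}}\ \sum_{S\in W_{Q,t}} WEIGHT(G(S)).
$$

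Next I would use the symmetry observation that $|W_{Q,t}|=|W_{[i],t}|$ whenever $|Q|=i$, together with the weight bound $WEIGHT(G(S))\le k^{-(i-t)}C^{-(2m-i)}SQUARES(Ver(G(S)))$ from the lemma above. Since $Ver(G(S))=Q$ for $S\in W_{Q,t}$, the factor $SQUARES(Q)=\prod_{v\in Q}x_v^2$ depends only on $Q$, not on the particular sequence, so the inner sum over $S\in W_{Q,t}$ contributes $|W_{[i],t}|\cdot k^{-(i-t)}C^{-(2m-i)}\cdot SQUARES(Q)$. Summing over all $Q$ of size $i$ gives a factor $\sum_{|Q|=i} SQUARES(Q) = \sum_{|Q|=i}\prod_{v\in Q}x_v^2$. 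The key estimate is that this symmetric sum is at most $\frac{1}{i!}\left(\sum_{v\in[d]}x_v^2\right)^i = \frac{1}{i!}$, using $\|x\|_2=1$: indeed $\left(\sum_v x_v^2\right)^i$ expands into $i!$ times the sum over $i$-subsets plus nonnegative diagonal terms, so dropping those terms only decreases the value. Substituting this bound yields exactly
$$
E(Z^{2m}) \le 2^{2m}\sum_{i=1}^{2m}\frac{1}{i!}\sum_{t=1}^{i/2}|W_{[i],t}|\,\frac{1}{k^{i-t}}\,\frac{1}{C^{2m-i}},
$$
which is the claim.

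The only genuine content beyond bookkeeping is the inequality $\sum_{|Q|=i}\prod_{v\in Q}x_v^2 \le \frac{1}{i!}(\sum_v x_v^2)^i$, which is the standard comparison between the elementary symmetric polynomial $e_i$ and the power of $e_1$; I expect this to be the one step worth stating explicitly, though it is elementary (expand the multinomial and discard repeated-index terms, which are nonnegative). A secondary point requiring a little care is the reindexing from $W_{Q,t}$ to $W_{[i],t]$: one must check that the map relabeling $Q$ to $[i]$ is a weight-preserving bijection on sequences, which follows because $WEIGHT$ and the combinatorial type of $G(S)$ depend only on the abstract multigraph structure and the multiset of $x$-values at the vertices, and the $SQUARES$ factor has already been pulled out. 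Everything else is direct substitution of the lemmas and facts proved above.
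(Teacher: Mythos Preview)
Your proposal is correct and follows essentially the same approach as the paper: group the nonzero terms by $(i,t,Q)$, replace $E(R_S)$ by $WEIGHT(G(S))$, apply the weight bound to extract the factor $k^{-(i-t)}C^{-(2m-i)}SQUARES(Q)$, use $|W_{Q,t}|=|W_{[i],t}|$, and finish with $\sum_{|Q|=i}\prod_{v\in Q}x_v^2 \le \tfrac{1}{i!}\bigl(\sum_v x_v^2\bigr)^i = \tfrac{1}{i!}$. Your added justification of the elementary symmetric polynomial inequality is the only place you go beyond the paper, which leaves that step implicit.
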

\begin{proof}
$$
E(Z^{2m}) = 2^{2m}\sum_{S\in A} E(R_S) = \ \ \ \ \  \  \ \ \  \ \ \ \ \ \ \ \  \  \ \ \  \ \ \ \text{(By Fact \ref{fct:dfkj})}
$$
$$
2^{2m}\sum_{i=1}^{2m} \sum_{t=1}^{i/2}\sum_{Q \in [d], |Q| = i}\sum_{S\in W_{Q,t}} E(R_S) =
$$
$$
2^{2m}\sum_{i=1}^{2m} \sum_{t=1}^{i/2}\sum_{Q \in [d], |Q| = i}\sum_{S\in W_{Q,t}} WEIGHT(G(S)) \le
$$
$$
2^{2m}\sum_{i=1}^{2m} \sum_{t=1}^{i/2}\sum_{Q \in [d], |Q| = i}\sum_{S\in W_{Q,t}} {1\over k^{i-t}}{1\over C^{2m-i}}SQUARES(G(S)) \le
$$
$$
2^{2m}\sum_{i=1}^{2m} \sum_{t=1}^{i/2}|W_{[i],t}| {1\over k^{i-t}}{1\over C^{2m-i}}\sum_{Q \in [d], |Q| = i}SQUARES(Q) \le
$$
$$
2^{2m}\sum_{i=1}^{2m} \sum_{t=1}^{i/2}|W_{[i],t}| {1\over k^{i-t}}{1\over C^{2m-i}}{1\over i!}(\sum_{j \in [d]}x_j^2)^i =
$$
$$
2^{2m}\sum_{i=1}^{2m} \sum_{t=1}^{i/2}|W_{[i],t}| {1\over k^{i-t}}{1\over C^{2m-i}}{1\over i!}.
$$
\end{proof}

\subsection{Proof of Theorem \ref{lm:main}}

\begin{proof}
Let $\epsilon<m^{-2}$. Then by Lemma \ref{lm:dkfkf} and by Fact \ref{fct:dskjkjsd} there exists an absolute constant $\alpha$ such that if
$C > \alpha\epsilon^{-1} \left({m\over F(m)}\right)^2$ and $k>\alpha\epsilon^{-2}m$ then
the following is true.
For any $1\le i \le 2m$ and for any $1\le t \le i/2$:
$$
|W_{[i],t}| \le (0.01)^{2m}i!\epsilon^{2m}k^{i-t}C^{2m-i}.
$$

Thus, by Lemma \ref{lm:dkfkfq} for sufficiently large $m$:
$$
E(Z^{2m}) \le 2^{2m}\sum_{i=1}^{2m} {1\over i!}\sum_{t=1}^{i/2}|W_{[i],t}|{1\over k^{i-t}}{1\over C^{2m-i}}\le
$$
$$
\epsilon^{2m}m^2(0.02)^{2m} \le (0.1\epsilon)^{2m}.
$$

To show the second claim, note that $P(|Z| \ge \epsilon) \le P(Z^{2m} \ge \epsilon^{2m})$. Also, recall that $m=O(\log(1/\delta))$.
Since $Z^{2m}$ is a non-negative random variable, the second claim follows from Markov inequality and the first claim of the theorem.
\end{proof}

\begin{fact}\label{fct:dskjkjsd}
$$
k^{i-t}C^{2m-i}\ge \alpha^{m}{1\over \epsilon^{2m}}m^{4m+i-5t}\left({1\over F(m)}\right)^{4m-2i}.
$$
\end{fact}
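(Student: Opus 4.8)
The plan is to reduce Fact~\ref{fct:dskjkjsd} to elementary arithmetic of exponents, using only the two lower bounds hypothesized in Theorem~\ref{lm:main}: $k > \alpha\epsilon^{-2}m$ and $C > \alpha\epsilon^{-1}(m/F(m))^{2}$, where we may assume $\alpha \ge 1$ (enlarging $\alpha$ only strengthens the hypothesis, and the constant in Theorem~\ref{lm:main} is ours to choose). First I would raise these bounds to the exponents $i-t$ and $2m-i$ respectively — both nonnegative, since $1 \le t \le i/2$ forces $2t \le i \le 2m$ and $t \le m$ — and multiply, obtaining
$$
k^{i-t}C^{2m-i} \;\ge\; \alpha^{2m-t}\,\epsilon^{-(2m+i-2t)}\,m^{4m-i-t}\,F(m)^{-(4m-2i)}.
$$
It then remains to check that this right-hand side dominates the claimed lower bound $\alpha^{m}\epsilon^{-2m}m^{4m+i-5t}F(m)^{-(4m-2i)}$, which I would do factor by factor.

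The $F(m)$-factors are literally identical, so they contribute nothing. For the $\alpha$-factor, $t \le i/2 \le m$ gives $2m-t \ge m$, so $\alpha^{2m-t} \ge \alpha^{m}$ since $\alpha \ge 1$. The remaining work is to reconcile $\epsilon^{-(2m+i-2t)}m^{4m-i-t}$ with $\epsilon^{-2m}m^{4m+i-5t}$. Here I would peel off the surplus power of $\epsilon$: write $\epsilon^{-(2m+i-2t)} = \epsilon^{-2m}\cdot\epsilon^{-(i-2t)}$, where $i-2t \ge 0$ because $t \le i/2$. Since we are in the regime $\epsilon < m^{-2}$ (the same regime in which Theorem~\ref{lm:main} is proved), $\epsilon^{-(i-2t)} \ge m^{2(i-2t)}$, and absorbing this into the $m^{4m-i-t}$ already present raises the $m$-exponent to $(4m-i-t)+(2i-4t) = 4m+i-5t$, exactly as required. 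One should keep an eye on the direction of each inequality throughout, since $\epsilon$ and $1/F(m)$ are at most $1$ while $m$, $\alpha$, $k$, $C$ are at least $1$; with the conventions above every step is monotone the right way.

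The one point genuinely worth flagging — and the step I expect to be the crux — is the $m$-exponent: a naive substitution of the two lower bounds alone yields only $m^{4m-i-t}$, which is \emph{strictly smaller} than the target $m^{4m+i-5t}$ whenever $i > 2t$. Recovering the missing $m^{2(i-2t)}$ is precisely what the spare $\epsilon^{-(i-2t)}$ factor buys us, once it is converted through $\epsilon < m^{-2}$. Aside from that bookkeeping observation there is no combinatorial or analytic obstacle; the Fact follows by collecting the terms above.
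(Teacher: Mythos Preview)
Your argument is correct and is essentially identical to the paper's own proof: substitute the two lower bounds $k>\alpha\epsilon^{-2}m$ and $C>\alpha\epsilon^{-1}(m/F(m))^{2}$, collect exponents to obtain $\alpha^{2m-t}\epsilon^{-(2m+i-2t)}m^{4m-i-t}F(m)^{-(4m-2i)}$, then use $t\le m$ to drop $\alpha^{2m-t}$ to $\alpha^{m}$ and $\epsilon\le m^{-2}$ to convert the surplus $\epsilon^{-(i-2t)}$ into the missing $m^{2(i-2t)}$. The paper carries out exactly these three steps in the same order.
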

\begin{proof}
Recall that $\epsilon\le m^{-2}$ and that $t\le i/2\le m$.
$$
k^{i-t}C^{2m-i} \ge {1\over \epsilon^{2i-2t}}(\alpha m)^{i-t}\alpha^{2m-i}{1\over \epsilon^{2m-i}}\left({m\over F(m)}\right)^{4m-2i} \ge
$$
$$
\alpha^{m}{1\over \epsilon^{2m}}{1\over \epsilon^{i-2t}}m^{4m-i-t}\left({1\over F(m)}\right)^{4m-2i} \ge
$$
$$
\alpha^{m}{1\over \epsilon^{2m}}m^{4m+i-5t}\left({1\over F(m)}\right)^{4m-2i}.
$$
\end{proof}

In the remainder of our paper we prove the following main technical lemma.
\begin{lemma}\label{lm:dkfkf}
Let $\epsilon<m^{-2}$. There exists an absolute constant $CONST = O(1)$ such that for any $1\le i \le 2m$ and for any $1\le t \le i/2$:
$$
|W_{[i],t}| \le (CONST)^{2m}i!m^{4m+i-5t}\left({1\over F(m)}\right)^{4m-2i}.
$$
\end{lemma}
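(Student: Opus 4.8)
I would start from an exact decomposition of $|W_{[i],t}|$ by connected components. Writing $v_c$ and $e_c$ for the number of vertices and the number of edges of the $c$-th component, every $S\in W_{[i],t}$ is recovered, uniquely, from: an unordered partition of $[i]$ into $t$ blocks of sizes $v_1,\dots,v_t\ge 2$; a composition $e_1+\dots+e_t=2m$ with $e_c\ge v_c$ for each $c$; for each block a sequence of $e_c$ pairs on that block whose multigraph is connected, Eulerian and spans the block; and the interleaving pattern telling which of the $2m$ positions lies in which block. Letting $N(v,e)$ be the number of sequences of $e$ pairs on a fixed $v$-element set whose multigraph is connected, Eulerian and spanning, this yields the identity
$$|W_{[i],t}|=\frac{1}{t!}\sum_{\substack{v_1,\dots,v_t\ge 2\\ v_1+\dots+v_t=i}}\frac{i!}{v_1!\cdots v_t!}\sum_{\substack{e_1,\dots,e_t\\ e_c\ge v_c,\ \sum e_c=2m}}\frac{(2m)!}{e_1!\cdots e_t!}\prod_{c=1}^{t}N(v_c,e_c).$$
Since we only need an upper bound the factor $\tfrac1{t!}$ may be kept or dropped, but it must be kept in the regime of large $t$: dropping it costs $t!\asymp m^{\Theta(m)}$ when $t=\Theta(m)$ (e.g.\ in the all-double-edges configuration $i=2m$, $t=m$), which is unaffordable against the $(CONST)^{2m}$ budget. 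Likewise, the factorials $v_c!$ and $e_c!$ must be carried through the computation rather than bounded away crudely.

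\textbf{Bounding the single-component count $N(v,e)$.} This is the technical core. I would combine two estimates. (i) A spanning-tree estimate: run the union--find process on the sequence; it singles out $v-1$ ``first-connection'' positions that build a spanning tree, the remaining $e-v+1$ positions holding ``excess'' pairs. Choosing the tree positions ($\binom{e}{v-1}$ ways), the tree-building subsequence (at most $v^{v-2}(v-1)!\le v^{2v}$ ways, since a tree is acyclic so every edge order is a valid build order), and the excess pairs (each among $\binom v2$), gives $N(v,e)\le \binom{e}{v-1}\,v^{2v}\binom v2^{\,e-v+1}$. (ii) The trivial estimate $N(v,e)\le\binom v2^{\,e}$, which is tight for $v=2$ (where $N(2,e)=1$, a single repeated edge) and, more generally, is what captures the collapse of orderings $e!/\prod_f\mu_f!$ forced by the even-degree constraint when a few vertex-pairs absorb many edges. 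The point the introduction calls a ``barrier'' is exactly that the spanning-tree estimate over-counts by up to $\binom{e}{v-1}$ when edges coincide (designating an excess position is not always realizable, e.g.\ for a cycle the back edge is always last), so the Eulerian structure has to be exploited rather than used merely to kill odd-degree monomials. Taking the better of the two for each $(v,e)$ produces a bound of the shape $N(v,e)\le (CONST)^{e}\cdot v^{O(v)}$ times a factor decaying geometrically in the excess $e-v$; crucially it contains no hidden $m^{\Omega(m)}$, because per-component over-count factors like $\prod_c v_c$ or $\prod_c e_c$ are at most $e^{(\sum v_c)/e}$, $e^{(\sum e_c)/e}$, i.e.\ $(CONST)^{2m}$, and the residual $e_c!$, $v_c!$ in the denominators absorb the rest.

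\textbf{Summation.} I would substitute the $N(v,e)$ bound and read the double sum as extraction of the $y^{i}z^{2m}$ coefficient of $\Phi(y,z)^{t}$, where $\Phi(y,z)=\sum_{v\ge 2}\sum_{e\ge v}\tfrac{N(v,e)}{v!\,e!}y^{v}z^{e}$, estimating this coefficient by a Cauchy/saddle-point argument together with Stirling for $(2m)!$ and $i!$. With $i$ vertices and an edge budget of $2m$, the sum is dominated by configurations in which almost all components are double edges carrying an average multiplicity $\mu=\Theta(2m/t)$; the combinatorial cost of concentrating that many edges onto so few vertex-pairs, mediated by the $1/\mu!$ factors, is $(\mu!)^{-t}\approx \mu^{-2m}$, and balancing this against the available freedom forces $\mu$ to be of order $F(m)$, where $F(m)=\Theta(\log m/\log\log m)$ is characterized by $F(m)!\asymp m$ (equivalently $F(m)^{F(m)}\asymp m$). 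Tracking the exponents through $\sum_c(2e_c+v_c-5)=4m+i-5t$ and $\sum_c 2(e_c-v_c)=4m-2i$, and absorbing the number of compositions (at most $2^{i}2^{2m}$) and all factors polynomial in $m$ into the constant, yields $|W_{[i],t}|\le (CONST)^{2m}\,i!\,m^{4m+i-5t}\,(1/F(m))^{4m-2i}$.

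\textbf{Main obstacle.} Two places are delicate. First, the estimate for $N(v,e)$ must be simultaneously sharp for a component that is essentially one repeated edge (answer $1$, not $\mathrm{poly}(e)$) and for sprawling components with $e\gg v$; any bound losing a super-constant factor \emph{per component} in the worst case breaks the $(CONST)^{2m}$ budget, so the interplay of connectivity (Cayley-type savings) and of the even-degree constraint (order-collapse of repeated edges) has to be handled together. Second, the summation must deliver the exponent of $1/F(m)$ as \emph{exactly} $4m-2i$: a constant-factor error in that exponent would be fatal, and since ``node-disjoint union of $t$ Eulerian components'' is not a monotone graph property there is no monotone-coupling shortcut, so the balance point (average multiplicity $\asymp F(m)$) must be located and controlled directly. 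The hypothesis $\epsilon<m^{-2}$ plays no role in this purely combinatorial statement.
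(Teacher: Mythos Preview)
Your route is genuinely different from the paper's and considerably more ambitious. The paper does not decompose $|W_{[i],t}|$ component by component, never introduces a per-component count $N(v,e)$, and uses no generating functions or saddle-point analysis. Instead it splits into three regimes of $t$ and handles each by elementary counting.

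For \emph{small} $t$ (namely $t<0.39i$, so that $2i-5t>0.05i$), the paper uses only the trivial bound $|W_{[i],t}|\le i^{4m}$ and checks algebraically that this is at most the right-hand side; the only fact about $F$ that enters is the one-line observation $F(m)^{F(m)}\le m^{0.01}$.

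For $t>0.39i$ (so $3t>i$), the paper's key observation is a pigeonhole: since every component has at least two vertices, at least $3t-i$ of the $t$ components must have \emph{exactly} two vertices. One then sums over which $3t-i$ disjoint pairs from $[i]$ play this role (at most $\binom{i}{2(3t-i)}\frac{(2(3t-i))!}{(3t-i)!}$ choices), and upper bounds the number of sequences $S$ in which those pairs occur as isolated components: one needs $2(3t-i)$ positions to place the forced copies of the isolated pairs (at most $(2m)^{2(3t-i)}$ ways), and each of the remaining $2(m-3t+i)$ positions carries a pair drawn from a set of size $O(z^2)$ in the ``medium $t$'' case (where $z=i-2t$) or $O(i)$ in the ``large $t$'' case ($z^2<2(3t-i)$). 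The rest is bookkeeping of exponents, again using only $F(m)^{F(m)}\le m^{0.01}$.

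What your decomposition buys is a uniform framework that in principle avoids the three-way case split; what it costs is the summation step, which in your sketch remains a heuristic (``dominated by configurations with average multiplicity $\mu\approx F(m)$''). That is precisely where the exponent $4m-2i$ on $1/F(m)$ has to emerge, and you have not shown how to extract it from $[y^{i}z^{2m}]\Phi(y,z)^{t}$ with only $(CONST)^{2m}$ slack; your paragraph ``tracking the exponents through $\sum_c(2e_c+v_c-5)=4m+i-5t$'' is arithmetic, not an argument, since $N(v,e)$ depends only on $(v,e)$ and the $m$, $F(m)$ factors can only come from the global balance in the sum. The paper sidesteps this entirely: by isolating the size-two components explicitly via pigeonhole, the ``average multiplicity'' never appears, and $F(m)$ enters only through the single comparison $F(m)^{F(m)}\le m^{0.01}$. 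If you want to push your approach through, the missing ingredient is a rigorous coefficient bound for $\Phi^{t}$; but adopting the paper's pigeonhole observation would let you short-circuit most of that analysis. Your closing remark that the hypothesis $\epsilon<m^{-2}$ is unused in this purely combinatorial lemma is correct.
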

\begin{proof}
The lemma follows directly from Lemma \ref{lm:small t},  Lemma \ref{lm:medium t} and  Lemma \ref{lm:large t}.
\end{proof}

\section{Bounding $W_{[i],t}$.}

\begin{fact}\label{fct:g}
There exists a constant $\upsilon$ such that for $F(m) \le {\upsilon\log(m) \over \log\log (m)}$ and for any $x>0$:
$$
F(m)^{F(m)} \le {m}^{0.01}.
$$
\end{fact}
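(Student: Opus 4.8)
The plan is to pass to logarithms and reduce the claim $F(m)^{F(m)}\le m^{0.01}$ to an elementary monotonicity estimate. Write $y=F(m)$ and abbreviate $g(m):=\upsilon\log m/\log\log m$, so the hypothesis reads $y\le g(m)$. I would first dispose of the degenerate range: if $y<1$ then $y^{y}=e^{y\log y}\le 1\le m^{0.01}$, since $y\log y\le 0$ on $(0,1]$, and there is nothing to prove. Hence I may assume $y\ge 1$, which in particular forces $g(m)\ge 1$.

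Next, since $\frac{d}{dy}(y\log y)=\log y+1>0$ for $y\ge 1$, the function $y\mapsto y^{y}$ is nondecreasing on $[1,\infty)$, so $F(m)^{F(m)}\le g(m)^{g(m)}$, and it suffices to show $g(m)^{g(m)}\le m^{0.01}$, i.e. $g(m)\log g(m)\le 0.01\log m$. For $m$ above an absolute threshold one has $\log\log m\ge \upsilon$, hence $g(m)=\upsilon\log m/\log\log m\le \log m$ and therefore $\log g(m)\le \log\log m$. Substituting,
$$
g(m)\log g(m)\;\le\;\frac{\upsilon\log m}{\log\log m}\cdot\log\log m\;=\;\upsilon\log m .
$$
Choosing the absolute constant $\upsilon$ small enough that $\upsilon\le 0.01$ (say $\upsilon=0.01$) makes the right-hand side at most $0.01\log m$, which completes the argument; the threshold on $m$ is harmless, since the estimate is only ever invoked for large $m$.

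I do not anticipate any real obstacle here. The two points that call for a little care are (i) separating off the range $F(m)<1$ so that the monotonicity of $y\mapsto y^{y}$ can be applied legitimately on $[1,\infty)$, and (ii) noting that the base of the logarithm is immaterial, since it only rescales the admissible constant $\upsilon$; if one prefers the statement to hold for all $m\ge m_0$ rather than merely for sufficiently large $m$, it suffices to take any $m_0$ with $\log\log m_0\ge\upsilon$, which validates the step $g(m)\le\log m$.
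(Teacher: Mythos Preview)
Your proposal is correct and takes essentially the same approach as the paper: both pass to logarithms and observe that $F(m)\log F(m)\le 0.01\log m$ for a sufficiently small constant $\upsilon$. The paper's proof is a one-line remark to this effect, whereas you have simply filled in the details (the $F(m)<1$ case, the monotonicity step, and the bound $\log g(m)\le\log\log m$), so there is no substantive difference in method.
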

\begin{proof}
Follows from the fact that for small constant $\upsilon$:
$$
F(m)\log(F(m)) \le 0.01\log(m).
$$
\end{proof}

\subsection{Small $t$}

\begin{lemma}\label{lm:small t}
Let $t< 0.39i$ and $\epsilon<m^{-2}$. There exists an absolute constant $CONST = O(1)$ such that for any $1\le i \le 2m$ and for any $1\le t \le i/2$:
$$
|W_{[i],t}| \le (CONST)^{2m}i!m^{4m+i-5t}\left({1\over F(m)}\right)^{4m-2i}.
$$
\end{lemma}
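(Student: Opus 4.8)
The goal is to count $|W_{[i],t}|$, the number of sequences $S$ of $2m$ ordered pairs whose associated multigraph $G(S)$ has vertex set exactly $[i]$, has $t$ connected components, and has all degrees positive and even. The natural route is: (1) fix the shape of the component structure---a partition of $[i]$ into $t$ blocks of sizes $i_1,\dots,i_t$ and a distribution of the $2m$ edges among these components, say $m_1,\dots,m_t$ edges (so $\sum i_\ell = i$, $\sum m_\ell = 2m$); (2) for each component of size $i_\ell$ with $m_\ell$ edges, bound the number of connected Eulerian multigraphs on a labeled vertex set of that size having exactly $m_\ell$ edges together with the number of ways to order those edges into a subsequence; (3) multiply over components and sum over all shapes. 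The per-component count is the crux: a connected multigraph on $i_\ell$ labeled vertices with all degrees $\ge 2$ and even has at least $i_\ell$ edges and its edges can be specified by choosing, for each of the $m_\ell$ edges, an unordered pair from $i_\ell$ vertices, giving a crude bound like $i_\ell^{2m_\ell}$, but we must also multiply by $(2m)!/\prod m_\ell!$ for interleaving the per-component edge-sequences and by a factor accounting for the internal ordering---overall something like $i!\cdot\prod_\ell i_\ell^{2m_\ell}\cdot(\text{Catalan/tree-type correction})$.

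\medskip

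The key quantitative gain in the small-$t$ regime ($t<0.39i$) is that connectivity forces each component to be ``dense enough'': a connected Eulerian component on $i_\ell$ vertices with $m_\ell$ edges has $m_\ell \ge i_\ell$, and since $\sum_\ell i_\ell = i$ while $\sum_\ell m_\ell = 2m$, when $i$ is bounded away from $2m$ the components must collectively absorb many repeated edges. More importantly, the number of \emph{distinct} edges in a connected graph on $i_\ell$ vertices is at most $\binom{i_\ell}{2}$, so the number of ways to build the multiset of edges is controlled by a multinomial-type expression whose entropy is maximized in a predictable way; the exponent $m^{4m+i-5t}$ on the right-hand side should emerge from carefully tracking powers of $i_\ell\le 2m$ against the spanning-tree constraint $m_\ell\ge i_\ell$ and the Euler constraint that removing a spanning tree leaves $m_\ell - (i_\ell - 1)$ ``free'' edges each contributing a factor $O(i_\ell^2)=O(m^2)$. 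I would set up the bound as a product over components, take logarithms, and optimize the allocation $(i_\ell,m_\ell)$ subject to the constraints, using the concavity/convexity of the relevant functions to push the extremum to a boundary case (all mass on one component, or all components of minimal size $i_\ell=2$), which is where the condition $t<0.39i$ gets used to guarantee the resulting exponent does not exceed $4m+i-5t$.

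\medskip

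Concretely, the steps in order: first, decompose $|W_{[i],t}|$ as a sum over ordered component-size vectors and edge-count vectors, pulling out the $i!$ (for labeling vertices into components, which matches the $i!$ in the statement) and the multinomial $\binom{2m}{m_1,\dots,m_t}$ for interleaving; second, bound the number of connected Eulerian multigraphs on a fixed labeled $i_\ell$-set with $m_\ell$ edges, plus orderings, by something like $(Ci_\ell)^{2m_\ell - i_\ell + 1}$ or a comparable expression capturing ``$i_\ell-1$ tree edges are nearly free, the remaining $\sim m_\ell$ edges cost $O(i_\ell^2)$ each''; third, substitute, collect all factors of $m$, and reduce to a finite optimization over real relaxations of $(i_\ell,m_\ell)$; fourth, solve that optimization and verify that under $t<0.39i$ and $\epsilon<m^{-2}$ the bound $(CONST)^{2m} i! \, m^{4m+i-5t}(1/F(m))^{4m-2i}$ holds, where the $(1/F(m))^{4m-2i}$ factor presumably soaks up a gap when $i$ is small relative to $m$ (so $4m-2i$ is large) via Fact~\ref{fct:g} relating $F(m)^{F(m)}$ to $m^{0.01}$.

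\medskip

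\textbf{Main obstacle.} The hard part will be step two combined with the optimization in step four: getting a tight enough count of connected Eulerian multigraphs with a prescribed number of edges on a labeled vertex set---tight in the exponent, since a factor of $m^{O(1)}$ slack per component becomes $m^{O(t)}$ overall and can destroy the bound---and then proving that the worst-case allocation of $(i_\ell,m_\ell)$ across components never beats the target exponent. Connectivity is not monotone and the Eulerian (all-even-degree) constraint is delicate, so I expect the count to go through a spanning-tree argument (choose a spanning tree: $i_\ell^{i_\ell-2}$ choices by Cayley-type bounds, or fewer since we also get to choose multiplicities) followed by freely adding $m_\ell - i_\ell + 1$ extra edges and then arguing that parity can always be fixed up without increasing the edge count by more than a controlled amount. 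Threading the constant $0.39$ through this optimization so that it exactly supports the $m^{4m+i-5t}$ exponent is where the real calculation lives, and it is precisely the part I would expect the authors to handle with a careful case split rather than a slick argument.
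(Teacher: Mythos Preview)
Your plan is dramatically more elaborate than what this lemma requires, and it obscures the one-line observation that actually drives the small-$t$ case. The paper does \emph{not} decompose by component structure, does not count Eulerian multigraphs per component, and does not solve any optimization over $(i_\ell,m_\ell)$. It simply uses the crude bound
\[
|W_{[i],t}| \le i^{4m},
\]
which holds because every sequence in $W_{[i],t}$ is a length-$2m$ sequence of pairs from $[i]$. The entire proof then reduces to checking that $i^{4m}$ already sits below the target $(CONST)^{2m} i!\, m^{4m+i-5t}(1/F(m))^{4m-2i}$ whenever $t<0.39i$. Using $i!\ge (i/e)^i$ one gets, for some constant $\phi$,
\[
\frac{i^{4m}}{i!\, m^{4m+i-5t}(1/F(m))^{4m-2i}} \le \phi^m\,\frac{i^{4m-i}}{m^{4m-i}}\cdot\frac{F(m)^{4m-2i}}{m^{2i-5t}},
\]
and since $t<0.39i$ forces $2i-5t>0.05i$, this is at most $\phi^m (i/m)^{4m-i} F(m)^{4m-2i}/m^{0.05i}$. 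Two short cases finish it: if $i\le m/F(m)$ then $(i/m)^{4m-i}\le F(m)^{-(4m-i)}$ already cancels the $F(m)^{4m-2i}$; if $i>m/F(m)$ then $m^{0.05i}$ dominates $F(m)^{4m}$ via Fact~\ref{fct:g} (which gives $F(m)^{F(m)}\le m^{0.01}$).

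So the ``main obstacle'' you anticipate---a tight count of connected Eulerian multigraphs and a delicate optimization threaded through the constant $0.39$---simply does not arise here. The constant $0.39$ is used only to guarantee the exponent inequality $2i-5t>0.05i$, nothing subtler. Your component-decomposition machinery is closer in spirit to what the paper does for the \emph{medium} and \emph{large} $t$ regimes (Lemmas~\ref{lm:medium t} and~\ref{lm:large t}), where structural information about small components is genuinely needed; for small $t$ it is unnecessary, and pursuing it would make you work much harder for a weaker or at best equivalent conclusion.
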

\begin{proof}
It follows from the definition of $W_{[i],t}$ that $|W_{[i],t}| \le i^{4m}$. Also note that $i\le 2m$.
Also, $2i-5t>0.05i$. Thus, there exists a constant $\phi$ such that,
$$
{i^{4m} \over i!m^{4m+i-5t}\left({1\over F(m)}\right)^{4m-2i}} \le \phi^{m}{i^{4m-i}\over m^{4m-i}}{F(m)^{4m-2i} \over m^{0.05i}}.
$$

First, consider the case when $i\le {m \over F(m)}$. Then the lemma follows immediately.
Otherwise, for sufficiently large $m$ and for some constant $\psi$:

$$
{i^{4m} \over i!m^{4m+i-5t}\left({1\over F(m)}\right)^{4m-2i}} \le \psi^m{F(m)^{4m} \over m^{0.05i}}.
$$

The lemma follows from Fact \ref{fct:g}.
\end{proof}

\subsection{Some Facts}

\begin{fact}\label{fct:1}
Let $t$ be such that $3t > i$ and let $S\in W_{[i],t}$. Then $G(S)$ has at least $(3t-i)>0$ components with size exactly $2$.
\end{fact}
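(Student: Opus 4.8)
Fact~\ref{fct:1}: if $3t > i$ and $S \in W_{[i],t}$, then $G(S)$ has at least $(3t-i) > 0$ connected components of size exactly $2$.

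The plan is to argue by a counting/averaging argument on component sizes. Recall that $G(S)$ has vertex set exactly $[i]$ (every vertex has positive degree, by definition of $W_{[i],t}$), it splits into $t$ connected components $G_1,\dots,G_t$, and every vertex has even degree, so in particular every vertex has degree $\ge 2$. A connected multigraph on a single vertex would force a self-loop, which is excluded because each edge of $G(S)$ comes from a pair $S_{i,j}=\{S_{i,1},S_{i,2}\}$ with $S_{i,1} < S_{i,2}$; hence every component has at least $2$ vertices. Let $n_\ell = |V(G_\ell)|$ denote the sizes; then $n_\ell \ge 2$ for all $\ell$ and $\sum_{\ell=1}^t n_\ell = i$.

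First I would isolate the key inequality: let $s$ be the number of components with $n_\ell = 2$ and hence $t - s$ components with $n_\ell \ge 3$. Then
$$
i = \sum_{\ell=1}^t n_\ell \ge 2s + 3(t-s) = 3t - s,
$$
so $s \ge 3t - i$. Since we are in the regime $3t > i$, this gives $s \ge 3t - i > 0$, which is exactly the claimed bound. That is the whole argument; the remaining writing is just making explicit the two elementary observations it rests on — (i) no component is a single vertex (no self-loops, because pairs are strictly ordered), and (ii) $\sum n_\ell = i$ because $Ver(G(S)) = [i]$ by membership in $W_{[i],t}$.

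The only thing that needs a word of care — the ``main obstacle,'' though it is minor — is justifying that every component really has at least two vertices rather than accidentally being a loop at one vertex; this is where the structural constraint $S_{i,1} < S_{i,2}$ in the definition of the sequence class $A$ is used, exactly as in the proof of Fact~\ref{fct:dfkj}. Once that is in hand the averaging inequality $i \ge 2s + 3(t-s)$ is immediate and the conclusion $s \ge 3t - i$ follows by rearranging. I would present the proof in three short sentences: (1) all $t$ components have size $\ge 2$ and sizes sum to $i$; (2) if $s$ components have size exactly $2$, the rest have size $\ge 3$, so $i \ge 3t - s$; (3) hence $s \ge 3t - i$, which is positive by hypothesis.
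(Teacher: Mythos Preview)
Your proof is correct and follows essentially the same counting argument as the paper: each component has at least two vertices, so bounding the total vertex count from below in terms of the number of size-$2$ versus size-$\ge 3$ components yields $s \ge 3t-i$. The paper phrases it dually (bounding the number of components with more than two nodes by $i-2t$), but the two are the same inequality rearranged; your added justification that no component can be a single vertex (via $S_{i,1}<S_{i,2}$) makes explicit what the paper simply asserts.
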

\begin{proof}
Each component must have at least $2$ nodes. Thus, there are at most $(i-2t)$ components with more than $2$ nodes.
Thus, there are at least $(3t-i)$ components of size exactly $2$.
\end{proof}

\begin{definition}
Define $SPARSE_{u}$ as a set of all sequences $S$ such that $Ver(G(S))=[i]$, $G(S)$ has at least $u$ components of size two
and such that all vertices of $G(S)$ are of even and positive degree.
\end{definition}

\begin{fact}\label{fct:3}
Let $t$ be such that $3t > i$. Then $W_{[i],t} \subseteq SPARSE_{3t-i}$.
\end{fact}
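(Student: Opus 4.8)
The plan is to prove this containment by directly verifying that every sequence in $W_{[i],t}$ satisfies the three defining conditions of $SPARSE_{3t-i}$, invoking Fact~\ref{fct:1} for the only one of those conditions that is not already part of the definition of $W_{[i],t}$.

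First I would fix an arbitrary $S \in W_{[i],t}$ and record what membership in $W_{[i],t}$ gives: $Ver(G(S)) = [i]$, the multigraph $G(S)$ has exactly $t$ connected components, and every vertex of $G(S)$ has positive and even degree. On the other side, unwinding the definition of $SPARSE_{3t-i}$, a sequence $S$ lies in it precisely when $Ver(G(S)) = [i]$, all vertices of $G(S)$ are of even and positive degree, and $G(S)$ has at least $3t-i$ connected components of size two. The first two requirements are literally among the properties just recorded for $S$, so there is nothing to do for them.

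For the remaining requirement I would apply Fact~\ref{fct:1}: since by hypothesis $3t > i$ and $S \in W_{[i],t}$, that fact yields that $G(S)$ has at least $(3t-i) > 0$ components of size exactly two, which is exactly what is needed to place $S$ in $SPARSE_{3t-i}$. Combining the three verified conditions gives $S \in SPARSE_{3t-i}$, and since $S$ was arbitrary, $W_{[i],t} \subseteq SPARSE_{3t-i}$.

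There is no genuine obstacle here; the only thing to be careful about is terminological bookkeeping — reading ``component of size two'' in the definition of $SPARSE_{u}$ as ``connected component with exactly two vertices'' so that it matches the phrasing ``components with size exactly $2$'' in Fact~\ref{fct:1}, and noting $3t-i>0$ so the conclusion is non-vacuous and consistent with the index convention for $SPARSE_{u}$. No computation is involved.
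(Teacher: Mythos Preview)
Your proof is correct and follows exactly the same approach as the paper, which simply states that the fact ``follows directly from Fact~\ref{fct:1} and the definitions.'' Your write-up just makes the routine verification of the three defining conditions of $SPARSE_{3t-i}$ explicit.
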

\begin{proof}
Follows directly from Fact \ref{fct:1} and the definitions.
\end{proof}

\begin{definition}\label{def:ffgghhh}
Let $Q$ be a set of size $(3t-i)$ of pairs of distinct numbers from $[i]$. That is $$Q= \cup_{j=1}^{(3t-i)}\{(q_{2j-1}, q_{2j})\}$$ such that $q_{j} \in [i]$ and $q_{j} \neq q_{j'}$ for any $j\neq j'$.
Let $\mathcal{Q}$  be a set of all such possible $Q$.
For $Q\in \mathcal{Q}$, define $CONCRETE(Q)$ to be a set of all sequences $S$
such that $G(S)$ has connected components with the following sets of vertices: $\{q_1, q_2\}, \dots, \{q_{2(3t-i)-1}, q_{2(3t-i)}\}$.
\end{definition}

\begin{fact}\label{fct:sldjkjdsljsdj}
$$
|\mathcal{Q}| \le {i\choose 2(3t-i)}{(2(3t-i))!\over (3t-i)!}. \ \ \ \
$$
\end{fact}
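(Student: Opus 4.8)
The plan is to establish the bound by a routine double-counting argument: encode each $Q \in \mathcal{Q}$ by an ordered list of the numbers it involves, count the lists, and divide by a lower bound on the number of lists that encode a common $Q$. Throughout I write $r = 3t-i$, so that $Q$ is a collection of $r$ pairwise disjoint pairs from $[i]$; note $r \ge 1$ whenever this Fact is invoked (since $3t > i$ there) and $2r \le i$ (since $t \le i/2$), so $\binom{i}{2r}$ makes sense and the objects being counted are genuine.

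First I would introduce the map $\Phi$ that sends an injective sequence $(q_1, q_2, \dots, q_{2r})$ of distinct elements of $[i]$ to the collection of pairs obtained by grouping consecutive coordinates, namely $\{(q_1,q_2),(q_3,q_4),\dots,(q_{2r-1},q_{2r})\}$. By the very form in which $Q$ is written in Definition~\ref{def:ffgghhh}, every $Q \in \mathcal{Q}$ lies in the image of $\Phi$, so $\Phi$ is onto $\mathcal{Q}$. The number of injective sequences in the domain of $\Phi$ is the falling factorial $i(i-1)\cdots(i-2r+1) = \binom{i}{2r}(2r)!$.

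Next I would lower-bound the size of each fiber of $\Phi$. Given $Q$, permuting the $r$ blocks $(q_{2j-1},q_{2j})$ among themselves yields $r!$ distinct sequences, all mapped by $\Phi$ to the same $Q$; hence $|\Phi^{-1}(Q)| \ge r!$ for every $Q$ in the image. (If one regards the pairs inside $Q$ as unordered, swapping the two coordinates of a block contributes a further factor $2^r$ to the fiber, which would only strengthen the bound; this is the source of the inequality rather than equality.) Combining the two counts,
$$
|\mathcal{Q}| \;\le\; \frac{\binom{i}{2r}(2r)!}{r!} \;=\; \binom{i}{2(3t-i)}\,\frac{(2(3t-i))!}{(3t-i)!},
$$
which is exactly the claimed inequality.

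There is no real obstacle here; the statement is essentially a bookkeeping exercise. The only points requiring some care are the convention on whether the pairs comprising $Q$ are taken as ordered or unordered (this affects only whether the displayed inequality is tight) and checking the range constraints $1 \le r$ and $2r \le i$ so that the binomial coefficient and the counting are meaningful — both of which follow from the standing hypotheses $3t > i$ and $t \le i/2$.
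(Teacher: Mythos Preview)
Your argument is correct and is precisely the standard counting that the paper has in mind; the paper in fact states this Fact without proof, and your double-counting via injective sequences and division by the $r!$ block permutations is the intended justification.
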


\begin{fact}\label{fct:kjsdkjsd}
$$
SPARSE_{(3t-i)} \subseteq \left(\bigcup_{Q\in \mathcal{Q}} CONCRETE(Q)\right).
$$
\end{fact}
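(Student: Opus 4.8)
The plan is to prove the inclusion by producing, for each sequence $S\in SPARSE_{(3t-i)}$, an explicit witness $Q\in\mathcal{Q}$ with $S\in CONCRETE(Q)$. So fix an arbitrary $S\in SPARSE_{(3t-i)}$. By the definition of $SPARSE_{(3t-i)}$, the multigraph $G(S)$ has at least $(3t-i)$ connected components whose vertex set has size exactly $2$. I would simply single out any $(3t-i)$ of these components and denote their vertex sets by $\{a_1,b_1\},\dots,\{a_{3t-i},b_{3t-i}\}$.

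The next step is to check that these data assemble into a legitimate element of $\mathcal{Q}$ in the sense of Definition \ref{def:ffgghhh}. Since distinct connected components of a multigraph are vertex-disjoint, the $2(3t-i)$ numbers $a_1,b_1,\dots,a_{3t-i},b_{3t-i}$ are pairwise distinct; and since $Ver(G(S))=[i]$, each of them lies in $[i]$. Hence $Q:=\bigcup_{j=1}^{3t-i}\{(a_j,b_j)\}$ is a set of $(3t-i)$ pairs of distinct numbers from $[i]$ of exactly the prescribed form, so $Q\in\mathcal{Q}$. Finally, $S\in CONCRETE(Q)$ holds essentially by construction: the vertex sets $\{a_1,b_1\},\dots,\{a_{3t-i},b_{3t-i}\}$ were chosen to be vertex sets of connected components of $G(S)$, which is precisely the defining property of $CONCRETE(Q)$. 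Since $S$ was arbitrary, this yields $SPARSE_{(3t-i)}\subseteq\bigcup_{Q\in\mathcal{Q}} CONCRETE(Q)$.

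I do not anticipate a real obstacle here; the proof is a one-line "choose a witness" argument, and the only points that deserve a sentence of care are bookkeeping ones: that the selected size-$2$ components are automatically disjoint (as distinct components), that their vertices are guaranteed to lie in $[i]$ by the assumption $Ver(G(S))=[i]$, and that we only need the existence of one such $Q$ — the choice is highly non-unique, and this redundancy is exactly what makes the subsequent union bound through $|\mathcal{Q}|$ (Fact \ref{fct:sldjkjdsljsdj}) lossy but still serviceable. For completeness one could remark on the degenerate regime $3t\le i$, in which both sides are trivial, but in the intended use Fact \ref{fct:1} guarantees $3t>i$, so the argument above covers the relevant case.
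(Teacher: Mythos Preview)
Your argument is correct and matches the paper's intent: the paper states this fact without proof, treating it as immediate from the definitions, and what you have written is exactly the natural one-line ``choose a witness'' verification that unpacks those definitions.
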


\subsection{Medium $t$}
In the remainder of the paper we consider the case\footnote{in fact we only need $3t > (1+\gamma)i $ for some constant $\gamma$} when $t > 0.39i$.
Denote $z = i-2t$.
In this section we consider the case when $t$ is not very large such that $z^2 > 2(3t-i)$.

\begin{fact}\label{fct:sdkjdssd}
Let $Q$ be an ordered set of size $(3t-i)$ from Definition \ref{def:ffgghhh}. Then there exists an absolute constant $\gamma$ such that
$$
|CONCRETE(Q)| \le (2m)^{2(3t-i)}(\gamma z)^{4(m-3t+i)}.
$$
\end{fact}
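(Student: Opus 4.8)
The plan is to count sequences $S \in CONCRETE(Q)$ by first accounting for the $2(3t-i)$ edges that must be placed among the $(3t-i)$ prescribed size-two components, and then bounding the number of ways to build the rest of the multigraph on the remaining vertices. Recall that $|Ver(G(S))| = i$, that $G(S)$ has $t$ components, that $(3t-i)$ of these are forced (by the definition of $CONCRETE(Q)$) to be the pairs $\{q_1,q_2\},\dots,\{q_{2(3t-i)-1},q_{2(3t-i)}\}$, and that every vertex has positive even degree so every component has at least two vertices. Since $S$ is a sequence of $2m$ edges, once we decide which of its $2m$ slots carry the edges internal to the forced pairs, the rest of the slots must describe a multigraph on the other $i - 2(3t-i) = 4t - 2i$ vertices. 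Here is where $z = i - 2t$ enters: the number of vertices \emph{not} in the forced size-two components is $4t - 2i = -2z$, wait — more carefully, the "free" part has $i - 2(3t-i) = 4t-2i$ vertices spread over $t - (3t-i) = i - 2t = z$ components (interpreting signs so that in the regime $3t>i$ we have $z<0$; the bound is stated in terms of $z^2$ and powers of $(\gamma z)$, so what actually matters is the magnitude $|z| = 2t-i$ and the count $2t-i$ of non-trivial components together carrying $4t-2i$ vertices). Each such component has at least two vertices, so these components are small, and a crude union bound over which slot gets which edge, which endpoints each free edge connects, together with the multinomial choice of how the $2m$ slots split between "forced-pair edges" and "free edges," will give the claimed product form.

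Concretely, first I would bound the number of ways to assign edges to the forced size-two components: there are $2(3t-i)$ such edges to place (two per forced pair, since each of those two vertices needs even positive degree and the component has only two vertices, so it is a pair of parallel edges), and each can go into one of at most $2m$ slots of the sequence, giving the factor $(2m)^{2(3t-i)}$. Then the remaining $2m - 2(3t-i) = 2(m - 3t + i)$ edges form a multigraph on at most $4t - 2i$ vertices; each such edge is an unordered pair from a set of size at most $4t-2i \le 2|z|$, so there are at most $\binom{4t-2i}{2} \le (\gamma z)^2$ choices per edge for a suitable absolute constant $\gamma$, yielding $(\gamma z)^{2 \cdot 2(m-3t+i)} = (\gamma z)^{4(m-3t+i)}$. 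Multiplying the two contributions gives exactly $(2m)^{2(3t-i)}(\gamma z)^{4(m-3t+i)}$, as required. The multinomial factor $\binom{2m}{2(3t-i)}$ for choosing which slots are the forced-pair edges is absorbed into the constant $\gamma$ (it is at most $2^{2m}$, and $m-3t+i \ge 0$ in this regime so there is room), or can be handled by enlarging $\gamma$ since the bound only needs to hold up to an absolute constant base raised to $O(m)$.

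The main obstacle I expect is being careful about the book-keeping of \emph{which} vertices are free and exactly how many non-trivial "free" components there are, since the sign of $z = i - 2t$ is negative in the operative regime $t > 0.39 i$ (equivalently $3t > i$), so "$z$" here really denotes a quantity whose relevant feature is $|z| = 2t - i$, the number of components of size greater than two, with $4t - 2i = 2|z|$ free vertices among them. One must verify that every free vertex indeed lands in a component of bounded size — this uses that the total vertex count is $i$ and $3t-i$ components already consume $2(3t-i)$ vertices — and that a free edge therefore has both endpoints in a set of size $O(|z|)$, so that the per-edge count is $O(z^2)$ rather than $O(i^2)$. A secondary subtlety is that $CONCRETE(Q)$ fixes the vertex sets of the forced components but not the edges, and the definition does not forbid $G(S)$ from having \emph{more} structure; however, since $Ver(G(S)) = [i]$ and $G(S)$ has exactly $t$ components with the $3t-i$ forced ones being singletons-of-pairs, the remaining $i - 2t$ components are determined as a multigraph on the complementary $4t-2i$ vertices, and the crude counting above suffices. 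Once the combinatorial set-up is pinned down, the rest is the elementary union bound sketched above, and I would present it in two short displayed inequalities followed by the product.
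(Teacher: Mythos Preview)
Your overall strategy---reserve slots for two appearances of each forced pair and bound the remaining slots by the number of possible edges on the leftover vertices---is the paper's strategy, but the execution has two genuine errors.

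First, the arithmetic and the sign of $z$. Since $t\le i/2$ throughout (Fact~\ref{fct:dfkj}), one has $z=i-2t\ge 0$; there is no negative-$z$ regime to agonise over. The number of vertices \emph{not} used by the $(3t-i)$ forced pairs is
\[
i-2(3t-i)=3i-6t=3z,
\]
not $4t-2i$. With this correction the number of free components is $t-(3t-i)=i-2t=z$ and the per-edge count on the free part is at most $\binom{3z}{2}=O(z^{2})$, which is the true source of the factor $(\gamma z)^{4(m-3t+i)}$. All of your discussion of $|z|=2t-i$ and ``interpreting signs'' is based on a miscalculation and should be discarded.

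Second, and more substantively, you assume each forced size-two component contributes \emph{exactly} two parallel edges (``so it is a pair of parallel edges''). The definition of $CONCRETE(Q)$ only fixes the \emph{vertex set} of those components; a pair $\{q_{2j-1},q_{2j}\}$ may appear $2,4,6,\dots$ times in $S$. Your count therefore omits every sequence in which some forced pair has multiplicity greater than two, so it is an undercount rather than an overcount, and the inequality does not follow. The paper repairs this by letting the remaining $2m-2(3t-i)$ slots range over the set $B'$ consisting of the $(3t-i)$ forced pairs \emph{together with} all pairs drawn from the $3z$ free vertices; the hypothesis $z^{2}>2(3t-i)$ of the medium-$t$ regime is precisely what ensures
\[
|B'|\le 2(3t-i)+(3z)^{2}\le 10z^{2},
\]
so the bound on the remaining slots is still $(O(z^{2}))^{2(m-3t+i)}$. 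Without folding the forced-pair edges back into the pool of ``free'' edges (or otherwise handling higher multiplicities), your argument does not cover all of $CONCRETE(Q)$.
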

\begin{proof}
Let $A' = \cup_{j=1}^{3t-i}\{(q_{2j-1},q_{2j})\}$.
Let $B = [i]\setminus \{q_1,\dots, q_{2(3t-i)}\}$ and $B' = A'\cup\left(\cup_{j,j' \in B, j<j'} (j,j')\right)$.
If $S\in CONCRETE(Q)$ then $S \in B'^{2m}$.
Also,
$$
|B'| \le 2(3t-i)+(i - 2(3t - i))^2 = 2(3t-i)+(3(z))^2 \le 10z^2.
$$
Also, each pair $(q_{2j-1},q_{2j})$ must appear at least twice in the sequence $S$.
We count the number of such sequences as follows. First, we choose the $2(3t-i)$ locations of the appearances for the
pairs $(q_{2j-1},q_{2j})$. For a fixed set of locations, the number of sequence $S$ that agree on these locations
is bounded by $|B'|^{2m-2(3t-i)}$. The total number of different sets of locations is bounded by $(2m)^{2(3t-i)}$.
This in an over-counting, yet it is sufficient for our goals.
Thus, we conclude that there exists an absolute constant $\beta$ such that
$$
|CONCRETE(Q)| \le (2m)^{2(3t-i)}(\beta z)^{4(m-3t+i)}.
$$
\end{proof}

\begin{lemma}\label{lm:medium t}
Let $t > 0.39i$ such that $z^2>2(3t-i)$. There exists an absolute constant $CONST = O(1)$ such that for any $1\le i \le 2m$ and for any $1\le t \le i/2$:
$$
|W_{[i],t}| \le (CONST)^{2m}i!m^{4m+i-5t}\left({1\over F(m)}\right)^{4m-2i}.
$$
\end{lemma}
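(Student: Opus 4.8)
\textbf{Proof plan for Lemma \ref{lm:medium t}.}
The plan is to bound $|W_{[i],t}|$ by first passing to the sparse-component structure established in Facts \ref{fct:1}--\ref{fct:kjsdkjsd}, and then counting concrete configurations using Fact \ref{fct:sdkjdssd}. Since $t > 0.39i$ we have $3t > i$, so Fact \ref{fct:3} gives $W_{[i],t} \subseteq SPARSE_{3t-i}$, and Fact \ref{fct:kjsdkjsd} gives $SPARSE_{3t-i} \subseteq \bigcup_{Q\in\mathcal{Q}} CONCRETE(Q)$. Therefore
$$
|W_{[i],t}| \le |\mathcal{Q}|\cdot \max_{Q\in\mathcal{Q}} |CONCRETE(Q)| \le {i\choose 2(3t-i)}{(2(3t-i))!\over (3t-i)!}\cdot (2m)^{2(3t-i)}(\gamma z)^{4(m-3t+i)},
$$
using Fact \ref{fct:sldjkjdsljsdj} and Fact \ref{fct:sdkjdssd}, where $z = i-2t$. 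Note $z\ge 1$ since $t\le i/2$, and in the medium regime $z^2 > 2(3t-i)$, so the quantities $3t-i$ and $m-3t+i$ are both nonnegative and the bound makes sense.

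The core of the argument is then to show that this product is at most $(CONST)^{2m} i!\, m^{4m+i-5t}(1/F(m))^{4m-2i}$. I would take logarithms (or just track powers) and compare exponents of $m$, of $i$, and of $F(m)$ separately. The dominant factor on the right is $m^{4m+i-5t}$ with a $(1/F(m))^{4m-2i}$ correction, and the factor $i!$ to be matched against the binomial coefficient. Using the crude bound ${i\choose 2(3t-i)} \le i^{2(3t-i)}/(2(3t-i))! \le i!/((i-2(3t-i))!\,(2(3t-i))!)$ and absorbing factorials of the small quantities $3t-i$ and $z = i-2t$ (both $O(m)$) into the $(CONST)^{2m}$ term via $n! \le n^n \le (2m)^{2m}$-type estimates, the binomial and the ${(2(3t-i))!/(3t-i)!}$ factor contribute roughly $i^{2(3t-i)}(2m)^{3t-i}$ up to a $(CONST)^{2m}$ factor. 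Combined with $(2m)^{2(3t-i)}$ from the locations count this gives about $i^{2(3t-i)}(2m)^{5(3t-i)}$, and the $CONCRETE$ estimate contributes $(\gamma z)^{4(m-3t+i)}$. The key identity to exploit is $4(m-3t+i) = 4m - 12t + 4i$, and the right-hand side exponent bookkeeping $4m+i-5t = 4(m-3t+i) + (7t - 3i)$, i.e. the leftover $m^{7t-3i}$; since $t > 0.39i$ one checks $7t - 3i$ has the right sign to absorb the $i^{2(3t-i)}$ and $m^{5(3t-i)}$ powers, while the $F(m)$-powers on both sides are matched because $z \le i$ forces $(\gamma z)^{4(m-3t+i)}$ to be dominated by $(\gamma i)^{4(m-3t+i)}$ and the target carries a compensating $(1/F(m))^{4m-2i}$ only when $i$ is large, a case handled exactly as in Lemma \ref{lm:small t} via Fact \ref{fct:g}.

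The main obstacle I expect is the exponent bookkeeping in the regime where $i$ is close to $2m$ and $t$ is close to $i/2$ simultaneously: there $z$ is small but $3t-i$ can be as large as $\Theta(m)$, so the factor $(2m)^{2(3t-i)}$ is genuinely large and must be beaten by $m^{4m+i-5t}$; the inequality $4m + i - 5t \ge 5(3t-i)$ that one needs, equivalently $4m \ge 20t - 6i$, is not automatic and relies on $t \le i/2 \le m$, so $20t - 6i \le 20t - 12t = 8t \le 8m$ — which is \emph{too weak} by a constant, meaning one actually has to use the sharper input $z^2 > 2(3t-i)$ characterizing the medium range to get a better handle on how large $3t-i$ can be relative to $z$, and hence relative to $4m - 2i$. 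I would therefore, after setting up the product bound, split into the subcase $i \le m/F(m)$ (where the target's $m^{4m+i-5t}$ is already enormous and everything is trivial, mirroring Lemma \ref{lm:small t}) and the subcase $i > m/F(m)$ (where $F(m)^{4m} \le m^{0.04m}$ by Fact \ref{fct:g} buys the needed slack against the polynomial-in-$i$ factors). The remaining work is routine estimation of the factorials via Stirling and collecting all constant bases into $CONST$.
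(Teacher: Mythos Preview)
Your setup is right and matches the paper: you correctly derive
$$|W_{[i],t}| \le {i\choose 2(3t-i)}\frac{(2(3t-i))!}{(3t-i)!}\,(2m)^{2(3t-i)}(\gamma z)^{4(m-3t+i)}$$
from Facts \ref{fct:3}, \ref{fct:sldjkjdsljsdj}, \ref{fct:kjsdkjsd}, \ref{fct:sdkjdssd}. But the subsequent bookkeeping contains real errors, and your proposed case split on $i$ does not close the argument.

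First, the sign claim is wrong. From $4m+i-5t = 4(m-3t+i)+(7t-3i)$ you infer that the target carries a leftover $m^{7t-3i}$ and assert ``since $t>0.39i$ one checks $7t-3i$ has the right sign''. But $7t-3i>0$ requires $t>3i/7\approx 0.4286i$; for $0.39i<t<3i/7$ the leftover exponent is \emph{negative} and you have nothing to absorb with. Second, your estimate that the binomial/factorial block contributes ``roughly $i^{2(3t-i)}(2m)^{3t-i}$'' is off: writing ${i\choose 2(3t-i)}\frac{(2(3t-i))!}{(3t-i)!}=\frac{i!}{(3z)!\,(3t-i)!}$ and using $3t-i>0.17i$, one gets $i!\cdot i^{-(3t-i)}\cdot z^{-3z}$ up to a $(\text{const})^m$ factor. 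You lose both the correct $i$-power and the $z^{-3z}$ saving coming from $(3z)!$, and the latter is exactly what turns the $z$-exponent into $4m-6t+i$.

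The paper simplifies via this factorial estimate to
$$|W_{[i],t}|\le \gamma^m\, i!\, i^{\,i-3t}\, m^{6t-2i}\, z^{4m-6t+i},$$
and then the decisive step is a three-way split on $z$, not on $i$: namely $z\le i/F(m)$; then $i/F(m)<z\le m/F(m)$ (with $i\le m$, else this range is empty); then $z>\max(i/F(m),m/F(m))$. The dominant free factor is $z^{4m-6t+i}$, so whether $z$ is small or large relative to $i/F(m)$ and $m/F(m)$ is precisely what governs whether the $F(m)^{4m-2i}$ factor can be absorbed. In each subcase the ratio to the target collapses either to at most $1$ directly or to something of the shape $F(m)^{O(i)}/m^{\Omega(z)}$, and Fact \ref{fct:g} finishes. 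Your split on $i\le m/F(m)$ versus $i>m/F(m)$ does not isolate this mechanism: when $i$ is large but $z$ is tiny (e.g.\ $t$ very close to $i/2$), your ``trivial'' first case is unavailable, yet the paper's first case ($z\le i/F(m)$) handles it in one line.
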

\begin{proof}
By Fact \ref{fct:sdkjdssd}, Fact \ref{fct:sldjkjdsljsdj} and Fact \ref{fct:kjsdkjsd}, there exists an absolute constant $\beta$ such that

$$
|W_{[i],t}| \le |SPARSE_{(3t-i)}| \le {i\choose 2(3t-i)}{(2(3t-i))!\over (3t-i)!}(2m)^{2(3t-i)}(\beta z)^{4(m-3t+i)}.
$$

%
%
%
%

Further,
$$
{i\choose 2(3t-i)}{(2(3t-i))!\over (3t-i)!}(2m)^{2(3t-i)}z^{4(m-3t+i)} =
$$
$$
{i!\over (3t-i)!(3z)!}(2m)^{2(3t-i)}z^{4m-12t+4i} \le
$$

Note that $3t-i>0.1i$. Thus,
$$
(3t-i)! > \left({(3t-i)\over e}\right)^{3t-i} \ge \left({i\over 10e}\right)^{3t-i}.
$$
Thus, there exists an absolute constant $\gamma$ such that
$$
|W_{[i],t}| \le \gamma^mi^{i-3t}i!(2m)^{2(3t-i)}z^{4m-12t+4i-3z} = \gamma^mi^{i-3t}i!m^{6t-2i}z^{4m-6t+i}.
$$

%

To prove the lemma, we need to estimate the following quantity:
$$
{m^{6t-2i}z^{4m-6t+i}F(m)^{4m-2i} \over i^{3t-i}m^{4m+i-5t}}.
$$

We show that there exists a constant $\phi$ such that:

$$
{m^{6t-2i}z^{4m-6t+i}F(m)^{4m-2i} \over i^{3t-i}m^{4m+i-5t}} = \phi^m.
$$
Rewrite:
$$
{m^{6t-2i}z^{4m-6t+i}F(m)^{4m-2i} \over i^{3t-i}m^{4m+i-5t}} = {z^{4m-6t+i}F(m)^{4m-2i} \over i^{3t-i}m^{4m+2i-9t}m^z}.
$$
We consider the following three cases.
If $z\le {i\over F(m)}$ then\footnote{We stress that this claim is correct for any $1\le i \le 2m$.} there exists a constant $\psi$:
$$
{z^{4m-6t+i}F(m)^{4m-2i} \over i^{3t-i}m^{4m+2i-9t}m^z} \le \psi^m{(F(m)z)^{4m-6t+i}F(m)^{-3i+6t} \over i^{4m+i-6t}m^z} \le \psi^m.
$$

If $i\le m$ and ${i\over F(m)} < z\le {m\over F(m)}$ then there exists a constant $\gamma$:
$$
{z^{4m-6t+i}F(m)^{4m-2i} \over i^{3t-i}m^{4m+2i-9t}m^z} \le \gamma^m\left({F(m)z\over m}\right)^{4m-9t+2i}{F(m)^{9t-4i} \over m^z} {z^{3t-i}\over i^{3t-i}} \le \gamma^m{F(m)^i \over m^z}.
$$

Finally, if $\max({m\over F(m)}, {i\over F(m)}) < z$ then there exists a constant $\beta$:
$$
{z^{4m-6t+i}F(m)^{4m-2i} \over i^{3t-i}m^{4m+2i-9t}m^z} \le \beta^m{F(m)^{4m} \over m^z}.
$$

The lemma follows from Fact \ref{fct:g}.

\end{proof}

\subsection{Large $t$}
In this section we consider $t$ such that $z^2<2(3t-i)$.
The proof of the following fact is identical to Fact \ref{fct:sdkjdssd} if we note that $z^2<2(3t-i) \le i$.
\begin{fact}\label{fct:sdddddkjdssd}
Let $Q$ be an ordered set of size $2(3t-i)$ from Definition \ref{def:ffgghhh}. Then there exists an absolute constant $\beta$ such that:
$$
|CONCRETE(Q)| \le (2m)^{2(3t-i)}(\beta i)^{2(m-3t+i)}.
$$
\end{fact}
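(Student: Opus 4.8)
The plan is to re-run the counting argument of Fact~\ref{fct:sdkjdssd} essentially verbatim, changing only the final estimate for the size of the edge alphabet. Fix $Q\in\mathcal{Q}$. As in that proof, put $A'=\cup_{j=1}^{3t-i}\{(q_{2j-1},q_{2j})\}$, let $B=[i]\setminus\{q_1,\dots,q_{2(3t-i)}\}$ be the set of vertices not used by the designated two-element components, and let $B'=A'\cup\left(\cup_{j,j'\in B,\ j<j'}(j,j')\right)$; every $S\in CONCRETE(Q)$ is a length-$2m$ word over $B'$. Since each component $\{q_{2j-1},q_{2j}\}$ is a two-vertex multigraph all of whose degrees are even and positive, the pair $(q_{2j-1},q_{2j})$ occurs at least twice in $S$.

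The combinatorics is then identical to the medium-$t$ case: first choose, among the $2m$ coordinates of $S$, the (at most) $2(3t-i)$ coordinates carrying these mandatory pair-appearances together with the pair-to-coordinate assignment, which can be done in at most $(2m)^{2(3t-i)}$ ways; then fill each of the remaining $2m-2(3t-i)$ coordinates with an arbitrary element of $B'$, in at most $|B'|^{2m-2(3t-i)}$ ways. This gives $|CONCRETE(Q)|\le (2m)^{2(3t-i)}|B'|^{2m-2(3t-i)}$, and since $2m-2(3t-i)=2(m-3t+i)$ the claim reduces to showing $|B'|=O(i)$.

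The one new ingredient is exactly this bound. As in Fact~\ref{fct:sdkjdssd} one has $|B'|\le 2(3t-i)+(i-2(3t-i))^2=2(3t-i)+9z^2$. In the ``large $t$'' regime we are assuming $z^2<2(3t-i)$, hence $9z^2<18(3t-i)$ and so $|B'|<20(3t-i)$; and from $t\le i/2$ we get $3t-i\le i/2$, i.e.\ $20(3t-i)\le 10i$. Therefore $|B'|\le 10i$, and substituting this above yields $|CONCRETE(Q)|\le (2m)^{2(3t-i)}(10i)^{2(m-3t+i)}=(2m)^{2(3t-i)}(\beta i)^{2(m-3t+i)}$ with $\beta=10$ an absolute constant.

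I do not expect a genuine obstacle: the entire content, as the remark preceding the statement indicates, is that the hypothesis $z^2<2(3t-i)\le i$ lets us trade the factor $z^2$ in the alphabet size of the medium-$t$ case for a factor $O(i)$, so the $z$ appearing in the base of Fact~\ref{fct:sdkjdssd} is simply replaced by a constant times $i$. The only point requiring care is that ``each designated pair appears \emph{at least twice}'' (not merely once) is what makes the number of remaining free coordinates exactly $2m-2(3t-i)=2(m-3t+i)$, so that the exponent of $(\beta i)$ in the statement comes out right; the rest is bookkeeping already carried out in Fact~\ref{fct:sdkjdssd}.
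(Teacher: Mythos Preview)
Your proposal is correct and is exactly the approach the paper intends: the paper's entire proof is the remark that Fact~\ref{fct:sdddddkjdssd} follows by repeating the argument of Fact~\ref{fct:sdkjdssd} verbatim, using the large-$t$ hypothesis $z^2<2(3t-i)\le i$ to replace the bound $|B'|\le 10z^2$ by $|B'|=O(i)$. Your computation $|B'|\le 2(3t-i)+9z^2<20(3t-i)\le 10i$ is precisely this substitution, and the rest of the bookkeeping (exponent $2m-2(3t-i)=2(m-3t+i)$, etc.) matches.
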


\begin{lemma}\label{lm:large t}
Let $t$ be such that $z^2<2(3t-i)$. There exists an absolute constant $CONST = O(1)$ such that for any $1\le i \le 2m$ and for any $1\le t \le i/2$:
$$
|W_{[i],t}| \le (CONST)^{2m}i!m^{4m+i-5t}\left({1\over F(m)}\right)^{4m-2i}.
$$
\end{lemma}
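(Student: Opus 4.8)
The plan is to retrace the proof of Lemma~\ref{lm:medium t} up to the estimate for $|CONCRETE(Q)|$, using Fact~\ref{fct:sdddddkjdssd} in place of Fact~\ref{fct:sdkjdssd}, and then to close with a short direct computation rather than a case analysis. Recall that throughout this section $t>0.39i$, so $3t>i$ and Facts~\ref{fct:3},~\ref{fct:kjsdkjsd},~\ref{fct:sldjkjdsljsdj} and~\ref{fct:sdddddkjdssd} combine to give
$$
|W_{[i],t}|\;\le\;|SPARSE_{(3t-i)}|\;\le\;\binom{i}{2(3t-i)}\frac{(2(3t-i))!}{(3t-i)!}\,(2m)^{2(3t-i)}\,(\beta i)^{2(m-3t+i)}
$$
for an absolute constant $\beta$. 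Writing $z=i-2t\ (\ge 0)$ and using $i-2(3t-i)=3z$, the combinatorial prefactor equals $i!\big/\big((3z)!\,(3t-i)!\big)$.

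I would then bound the denominator factorials from below: $(3z)!\ge 1$ trivially, and $(3t-i)!\ge\big((3t-i)/e\big)^{3t-i}\ge\big(i/(10e)\big)^{3t-i}$, where the last step uses $3t-i>0.1i$ (a consequence of $t>0.39i$), exactly as in Lemma~\ref{lm:medium t}. Since $3t-i$, $m-3t+i$ and $6t-2i$ are all $O(m)$, collecting the constant factors gives
$$
|W_{[i],t}|\;\le\;(CONST)^{m}\,i!\,m^{6t-2i}\,i^{\,2m-9t+3i}.
$$
I keep $i!$ intact since it also appears on the right-hand side of the lemma; dropping $(3z)!$ costs nothing, because $z^2<2(3t-i)\le i$ forces $z<\sqrt i$, so $(3z)!$ is at most $\exp(O(\sqrt m\,\log m))$, which is smaller than $(CONST)^m$ for $m$ large. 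It then remains to show $i^{2m-9t+3i}m^{6t-2i}\le (CONST)^{m}m^{4m+i-5t}(1/F(m))^{4m-2i}$, i.e.
$$
Q\;:=\;\frac{i^{\,2m-9t+3i}\,F(m)^{4m-2i}}{m^{\,4m+3i-11t}}\;\le\;(CONST)^{m}.
$$

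To bound $Q$ I would substitute $z=i-2t$ and regroup. The one algebraic point is the exact identity $4m-2i=2(2m-i)$, which collapses the $F(m)$-power; after regrouping,
$$
Q\;=\;\left(\frac{i}{m}\right)^{2m-1.5i}\left(\frac{F(m)^{2}}{m}\right)^{2m-i}\left(\frac{i^{4.5}}{m^{5.5}}\right)^{z}.
$$
Now $\big(F(m)^2/m\big)^{2m-i}\le 1$ because $F(m)=O(\log m/\log\log m)$ makes $F(m)^2<m$ for $m$ large while $2m-i\ge 0$; $\big(i^{4.5}/m^{5.5}\big)^{z}\le 1$ because $i\le 2m$ makes the base $<1$ for $m$ large while $z\ge 0$; and $\big(i/m\big)^{2m-1.5i}\le 2^{m}$, which one checks by separating $i\le m$ (base $\le 1$, exponent $\ge 0$), $m<i\le\tfrac43 m$ (base $\le\tfrac43$, exponent in $[0,\tfrac m2]$), and $i>\tfrac43 m$ (exponent negative, base $>1$). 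Hence $Q\le 2^{m}$, and the lemma follows for $m$ above an absolute constant, hence for all $m$ after enlarging $CONST$.

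The only real difficulty is bookkeeping: carrying the exponents correctly through the substitution $z=i-2t$ and spotting the cancellation $4m-2i=2(2m-i)$, which turns the a priori dangerous factor $F(m)^{4m-2i}$ into the manifestly harmless $\big(F(m)^2/m\big)^{2m-i}$. Note that, unlike Lemma~\ref{lm:small t} and Lemma~\ref{lm:medium t}, the ``large $t$'' case needs no appeal to Fact~\ref{fct:g}: once $F(m)^2<m$, the $F(m)$-powers are self-defeating. (If uniformity with Lemma~\ref{lm:medium t} were preferred, one could instead split $Q$ into cases according to the size of $z$ and finish with Fact~\ref{fct:g}; but the direct route above is shorter, and the case $z>m/F(m)$ never even arises here since $z<\sqrt i$.)
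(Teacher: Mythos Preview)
Your proof is correct and follows essentially the same route as the paper: both start from Fact~\ref{fct:sdddddkjdssd} combined with Facts~\ref{fct:3},~\ref{fct:sldjkjdsljsdj},~\ref{fct:kjsdkjsd}, simplify the prefactor $\binom{i}{2(3t-i)}\frac{(2(3t-i))!}{(3t-i)!}$ to $i!/((3z)!(3t-i)!)$, kill $(3t-i)!$ via $(3t-i)!\ge(i/(10e))^{3t-i}$, and reduce the lemma to bounding the same quantity $Q=i^{2m-9t+3i}F(m)^{4m-2i}/m^{4m+3i-11t}$.

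The only difference is in the last step. The paper bounds $Q$ by dropping an $m^{z}$ from the denominator (legal since $z\ge 0$) to write $Q\le (i/m)^{2m-9t+3i}(F(m)^{2}/m)^{2m-i}$ and then simply asserts this is $\le\psi^{2m}$. You instead keep the $m^{z}$ and regroup into three factors $(i/m)^{2m-1.5i}(F(m)^{2}/m)^{2m-i}(i^{4.5}/m^{5.5})^{z}$, bounding each separately with a short case split for the first. Your version is more explicit and makes transparent why Fact~\ref{fct:g} is not needed here (the paper's terse final line leaves the reader to verify $(i/m)^{2m-9t+3i}\le\psi^{m}$, which requires the same case analysis you spell out plus the observation that the residual $(i/m)^{4.5z}$ is subexponential because $z<\sqrt{i}$). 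Substantively the two arguments are the same.

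One minor remark: your parenthetical that ``dropping $(3z)!$ costs nothing'' is about tightness, not validity---since $(3z)!\ge 1$ sits in the denominator, dropping it is automatically a valid upper bound regardless of how large $(3z)!$ is.
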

\begin{proof}
By Fact \ref{fct:sdddddkjdssd}, Fact \ref{fct:sldjkjdsljsdj} and Fact \ref{fct:kjsdkjsd} there exists an absolute constant $\beta$ such that:

$$
|W_{[i],t}| \le |SPARSE_{(3t-i)}| \le {i\choose 2(3t-i)}{(2(3t-i))!\over (3t-i)!}(2m)^{2(3t-i)}(\beta i)^{2(m-3t+i)}.
$$
Further, there exists a constant $\gamma$:
$$
{i\choose 2(3t-i)}{(2(3t-i))!\over (3t-i)!}(2m)^{2(3t-i)}i^{2(m-3t+i)} \le \gamma^mi!(2m)^{2(3t-i)}i^{2m-9t+3i}.
$$
Thus,
$$
{|W_{[i],t}| \over i!m^{4m+i-5t}\left({1\over F(m)}\right)^{4m-2i}} \le \gamma^m{(2m)^{2(3t-i)}i^{2m-9t+3i}F(m)^{4m-2i}\over m^{4m+i-5t}} \le
$$
$$
\gamma^m{i^{2m-9t+3i}F(m)^{4m-2i}\over m^{4m+3i-11t}} \le
\gamma^m(i/m)^{2m-9t+3i}{F(m)^{4m-2i}\over m^{2m-i}} \le \psi^{2m}
$$
for a constant $\psi$.
\end{proof}

\bibliographystyle{plain}
\bibliography{Bibliography.JL}

\begin{thebibliography}{10}

\bibitem{861189}
Dimitris Achlioptas.
\newblock Database-friendly random projections: Johnson-lindenstrauss with
  binary coins.
\newblock {\em J. Comput. Syst. Sci.}, 66(4):671--687, 2003.

\bibitem{1132597}
Nir Ailon and Bernard Chazelle.
\newblock Approximate nearest neighbors and the fast johnson-lindenstrauss
  transform.
\newblock In {\em STOC '06: Proceedings of the thirty-eighth annual ACM
  symposium on Theory of computing}, pages 557--563, New York, NY, USA, 2006.
  ACM.

\bibitem{1347083}
Nir Ailon and Edo Liberty.
\newblock Fast dimension reduction using rademacher series on dual bch codes.
\newblock In {\em SODA '08: Proceedings of the nineteenth annual ACM-SIAM
  symposium on Discrete algorithms}, pages 1--9, Philadelphia, PA, USA, 2008.
  Society for Industrial and Applied Mathematics.

\bibitem{kkhsdhsdjf}
Nir Ailon and Edo Liberty.
\newblock An almost optimal unrestricted fast johnson-lindenstrauss transform.
\newblock In {\em SODA '11: To appear in proceedings of the twenty second
  annual ACM-SIAM symposium on Discrete algorithms}, 2011.

\bibitem{Alon200331}
Noga Alon.
\newblock Problems and results in extremal combinatorics--i.
\newblock {\em Discrete Mathematics}, 273(1-3):31 -- 53, 2003.
\newblock EuroComb'01.

\bibitem{springerlink:10.1007/BF02385577}
Ron Blei and Svante Janson.
\newblock Rademacher chaos: tail estimates versus limit theorems.
\newblock {\em Arkiv för Matematik}, 42:13--29, 2004.
\newblock 10.1007/BF02385577.

\bibitem{Bonami70ddd7}
A~Bonami.
\newblock Etude des coefficients de fourier des fonctions de lp(g.).
\newblock 1970.

\bibitem{1806737}
Anirban Dasgupta, Ravi Kumar, and Tam\'{a}s Sarlos.
\newblock A sparse johnson lindenstrauss transform.
\newblock In {\em STOC '10: Proceedings of the 42nd ACM symposium on Theory of
  computing}, pages 341--350, New York, NY, USA, 2010. ACM.

\bibitem{639795}
Sanjoy Dasgupta and Anupam Gupta.
\newblock An elementary proof of a theorem of johnson and lindenstrauss.
\newblock {\em Random Struct. Algorithms}, 22(1):60--65, 2003.

\bibitem{48193}
P.~Frankl and H.~Maehara.
\newblock The johnson-lindenstrauss lemma and the sphericity of some graphs.
\newblock {\em J. Comb. Theory Ser. A}, 44(3):355--362, 1987.

\bibitem{1971HansonWright}
D.~L. Hanson and F.~T. Wright.
\newblock A bound on tail probabilities for quadratic forms in independent
  random variables.
\newblock {\em The Annals of Mathematical Statistics}, 42(3):pp. 1079--1083,
  1971.

\bibitem{276876}
Piotr Indyk and Rajeev Motwani.
\newblock Approximate nearest neighbors: towards removing the curse of
  dimensionality.
\newblock In {\em STOC '98: Proceedings of the thirtieth annual ACM symposium
  on Theory of computing}, pages 604--613, New York, NY, USA, 1998. ACM.

\bibitem{0539.46017}
William~B. Johnson and Joram Lindenstrauss.
\newblock {Extensions of Lipschitz mappings into a Hilbert space.}
\newblock {\em Contemp. Math.}, 26:189--206, 1984.

\bibitem{DBLP:journals/corr/abs-1006-3585}
Daniel~M. Kane and Jelani Nelson.
\newblock A derandomized sparse johnson-lindenstrauss transform.
\newblock {\em CoRR}, abs/1006.3585, 2010.

\bibitem{MR1686370}
Rafa{\l} Lata{\l}a.
\newblock Tail and moment estimates for some types of chaos.
\newblock {\em Studia Math.}, 135(1):39--53, 1999.

\bibitem{1429833}
Edo Liberty, Nir Ailon, and Amit Singer.
\newblock Dense fast random projections and lean walsh transforms.
\newblock In {\em APPROX '08 / RANDOM '08: Proceedings of the 11th
  international workshop, APPROX 2008, and 12th international workshop, RANDOM
  2008 on Approximation, Randomization and Combinatorial Optimization}, pages
  512--522, Berlin, Heidelberg, 2008. Springer-Verlag.

\bibitem{1400129}
Ji\v{r}\'{\i} Matou\v{s}ek.
\newblock On variants of the johnson--lindenstrauss lemma.
\newblock {\em Random Struct. Algorithms}, 33(2):142--156, 2008.

\end{thebibliography}

\end{document}